\definecolor{blue-violet}{rgb}{0.54, 0.17, 0.89}
 \newtheorem{theorem}{Theorem}[section]
\newtheorem{definition}[theorem]{Definition}
 \newtheorem{lemma}[theorem]{Lemma}
 \newtheorem{example}[theorem]{Example}
\newenvironment{proof}{{\em Proof.}}{\hfill$\square$\medskip}
\newcommand{\depthk}[3]{\ensuremath{\delta_{{\sf sem}}^{#1}}(#2,#3)}
\newcommand{\Nat}{\mathbb{N}}
\newcommand{\cau}[2]{#1\circ#2}
\newcommand{\inp}[5]{\SigmaB_{#2\in#3}#1?\Seq{#4_#2}{#5_#2}}
\newcommand{\oup}[5]{\bigoplus_{#2\in#3}#1!\Seq{#4_#2}{#5_#2}}
\newcommand{\oupTx}[5]{\textstyle{\bigoplus}_{#2\in#3}#1!\Seq{#4_#2}{#5_#2}}
\newcommand{\gt}[6]{#1\to#2: \GlSyB_{#3\in#4}\Seq{#5_#3}{#6_#3}}
\newcommand{\gtBin}[4]{#1\to#2: (#3\,\GlSy\, #4)}
\renewcommand{\gt}[6]{#1\to#2: \{ \Seq{#5_#3}{#6_#3} \}_{#3\in#4}}   
\renewcommand{\gtBin}[4]{#1\to#2: \{#3\, , \, #4\}}   
 \newcommand{\RG}{\ensuremath{{\sf G}}}
 \newcommand{\PiB}{\mathlarger{\mathlarger\Pi}}
  \newcommand{\SigmaB}{\mathlarger{\mathlarger\Sigma}}
  \newcommand{\EGG}{\mathcal{E}}
  \newcommand{\ee}{\epsilon}
  \newcommand{\depthG}[2]{\delta(#1,  #2)}
  \newcommand{\gdepthS}[2]{\delta_{{\sf gsem}}(#1,  #2)}
   \newcommand{\gdepthkS}[2]{\delta^k_{{\sf gsem}}(#1,  #2)}
 \newcommand{\mylabel}[1]{\label{#1}}
\newcommand{\comocc}{\ensuremath{\gamma}}
\newcommand{\concat}[2]{\ensuremath{#1\,{\cdot}\,#2}}
 \newcommand{\inpP}[5]{\SigmaB_{#2\in#3}#1?\Seq{#4_#2}{#5}}
\newcommand{\oupP}[5]{\bigoplus_{#2\in#3}#1!\Seq{#4_#2}{#5}}
\newcommand{\refToFigure}[1]{Figure~\ref{#1}}
\newcommand{\refToSection}[1]{Section~\ref{#1}}
\newcommand{\refToExample}[1]{Example~\ref{#1}}
\newcommand{\refToLemma}[1]{Lemma~\ref{#1}}
\newcommand{\refToTheorem}[1]{Theorem~\ref{#1}}
\newcommand{\refToDef}[1]{Definition~\ref{#1}}
\newcommand{\set}[1]{\{#1\}}
\newcommand{\kf}[1]{\ensuremath{\mathsf{#1}\xspace}}
\newcommand{\cSinferrule}[3][]{
  \mprset{fraction={===},
  fractionaboveskip=0.2ex,
  fractionbelowskip=1.30ex}
  \inferrule[#1]{#2}{~#3}
}
\newcommand{\PP}{\ensuremath{P}}
\newcommand{\la}{\lambda}
\newcommand{\M}{\lambda}
\newcommand{\pp}{{\sf p}}
\newcommand{\q}{{\sf q}}
\newcommand{\pr}{{\sf r}}
\newcommand{\ps}{{\sf s}}
\newcommand{\pt}{{\sf t}}
\newcommand{\sendL}[2]{#1!#2}
\newcommand{\rcvL}[2]{#1?#2}
\newcommand{\pc}{~|~}
\newcommand{\Seq}[2]{#1;#2}
\newcommand{\inact}{\ensuremath{\mathbf{0}}}
\newcommand{\Q}{\ensuremath{Q}}
\newcommand{\R}{\ensuremath{R}}
\newcommand{\del}[1]{\ensuremath{\delta}}
\newcommand{\Nt}{\ensuremath{{\sf N}}}
\newcommand{\parN}{\mathrel{\|}}
\newcommand{\pP}[2] {#1[\![\,#2\,]\!]}
\newcommand{\stackred}[1]{\xrightarrow{#1}}
 \newcommand{\G}{\ensuremath{{\sf G}}}
 \newcommand{\End}{\kf{End}}
\newcommand{\partcomm}[1]{\ensuremath{{\sf part}(#1)}}
 \newcommand{\participant}[1]{\partcomm{#1}}
\newcommand{\proj}[2]{#1 \!  \upharpoonright  \! #2\,}
\newcommand{\subt}{\leq}
\newcommand{\derN}[2]{\vdash #1 :#2}
\newcommand{\rulename}[1]{{[\textsc{#1}]}}
\newcommand{\gtCom}[3]{#1\stackrel{#3}{\to}#2}
\newcommand{\gr}{\ensuremath{~\#~}}
\newcommand{\grr}{\ensuremath{\,\#\,}}
\newcommand{\grin}{\ensuremath{\,\#_{in}\,}}
\newcommand{\impl}{\ensuremath{\Rightarrow}}
\newcommand{\precP}{\ensuremath{\leq}}
 \newcommand{\GEs}{\mathcal{G\!E}}
\newcommand{\comseq}{ \sigma } 
\newcommand{\Comseq}{\textit{Traces}}
\newcommand{\eqclass}[1]{\ensuremath{[#1]_{\sim}}}
\newcommand{\quotient}{\ensuremath{\Comseq/\!\!\sim}}
\newcommand{\ESet}{\ensuremath{\mathcal{X}}}
\newcommand{\ESG}[1]{\ensuremath{\mathcal{S}(#1)}}
\newcommand{\emptyseq}{\ensuremath{\epsilon}}
\newcommand{\comm}[1]{\ensuremath{{\sf cm}}(#1)}
\newcommand{\last}[1]{\ensuremath{{\sf last}}(#1)}
\newcommand{\coDefGr}{::=^{coind}}
\newcommand{\GP}{{\G}}
\newcommand{\FPaths}[1]{{\sf Tr^+}(#1)}
\newcommand{\Comm}[3]{#1#3#2}
\newcommand{\ev}[1]{\ensuremath{{\sf ev}(#1)}}
 \newcommand{\projS}[2]{#1 @   #2\,}
\newcommand{\at}[2]{#1[#2]}
\newcommand{\range}[3]{#1[#2\,...\,#3]}
\newcommand{\cardin}[1]{\!\!\pc\! #1\!\pc\!}
\title{Towards a Semantic Characterisation \\
  of Global Type Well-formedness
}
\author{Ilaria Castellani\footnote{This research has been supported by the ANR17-CE25-0014-01 CISC project.} 
\institute{INRIA, Universit\'e C\^ote d'Azur, 
France} 
\email{ilaria.castellani@inria.fr}
\and
Paola Giannini\footnote{This work was partially funded by the MUR project ``T-LADIES'' (PRIN 2020TL3X8X) and has the financial support of the Universit\`a  del Piemonte Orientale.}
\institute{DiSSTE,
Universit\`{a} del Piemonte Orientale, Italy}
\email{paola.giannini@uniupo.it}
}
\begin{document}

\maketitle


\begin{abstract}
  We address the question of characterising the well-formedness
  properties of multiparty session types semantically, i.e., as
  properties of the semantic model used to interpret types.  Choosing
  Prime Event Structures (PESs) as our semantic model, we present
  semantic counterparts for the two properties that underpin global
  type well-formedness, namely projectability and boundedness, in this
  model. As a first step towards a characterisation of the class of
  PESs corresponding to well-formed global types, we identify some
  simple structural properties satisfied by such PESs.
  
 \end{abstract}


\section{Introduction} \mylabel{sec:intro} This paper builds on our
previous work~\cite{CDG23}, where we investigated the use of Event
Structures (ESs) as a denotational model for multiparty session types
(MPSTs).  That paper presented an ES semantics
for both sessions and global types,  using respectively Flow
Event Structures (FESs) and Prime Event Structures (PESs),  and
showed that if a session is typable with a given global type, then
 the
FES associated with the session and the PES associated with the global
type yield isomorphic domains of configurations. 

The ES semantics proposed in~\cite{CDG23} abstracts away from the
syntax of global types, by making explicit the concurrency relation
between independent communications. This abstraction is expected since
ESs are a ``true concurrency'' model, where concurrency is treated as
a primitive notion. However,~\cite{CDG23} focussed on the
equivalence between the FES of a session and the PES of its global type,
without  drawing all the consequences of its results and 
demonstrating the full benefits of the ES semantics for MPSTs.

In the present paper, we move one step further by studying how the
well-formedness property of global types considered in~\cite{CDG23} is
reflected in their interpretation as Prime Event Structures (PESs).
In~\cite{CDG23}, global type well-formedness is the conjunction of a
\emph{projectability} condition and a \emph{boundedness} condition.
Having semantic counterparts for these conditions will enable us to
reason directly on PESs, taking advantage of their faithful account of
concurrency and of their graphical representation.

We prove that: 1) all global types that type the same network yield
identical PESs, 2) our proposed properties of \emph{semantic
projectability} and \emph{semantic boundedness} for PESs reflect the
corresponding properties of global types, and 3) PESs obtained from
global types enjoy some 
 simple  structural properties.

The rest of the
paper is organised as follows. In~\refToSection{sec:syntax}
and~\refToSection{sec:events} we recall the necessary background
from~\cite{CDG23} and present the result 1) above.
In~\refToSection{sec:semantic-wf} we define our semantic notions of
projectability and boundedness and prove the result 2) above.
\refToSection{sec:struct-properties} is devoted to the result 3).
Finally, in~\refToSection{sec:rw-conclusion}
we discuss related work and sketch some directions for future work.


In the paper, all theorems are given with proofs while all lemmas are stated
without proofs.


\section{Networks and Global Types}
\mylabel{sec:syntax}
To set up the stage for our study, we recall the definitions of sessions and global
types from \cite{CDG23}.
In the core multiparty session
calculus of \cite{CDG23},
sessions are described as
networks of sequential processes, and processes
coincide with local types.
Session \emph{participants} are denoted by $\pp,\q,\pr$, and
\emph{messages} by $\la,\la'$.
%

\begin{definition}[Processes and networks]\mylabel{pN}\
\begin{itemize}
 \item {\it Processes} are  defined by: $\PP\coDefGr\oup\pp{i}{I}{\la}{\PP}~~\mid~~ \inp\pp{i}{I}{\la}{\PP}~~\mid~~\inact$ \\
where $I$ is a finite non-empty index set and $\la_h\not=\la_k\,$ for 
$h\neq k$.
\item {\em Networks} are defined by: $\Nt =  \pP{\pp_1}{\PP}\parN
  \dots  \parN  \pP{\pp_n}{\PP}$ with $n\geq 1$ and
  $\pp_h\not=\pp_k\,$ for $h\neq k$.
\end{itemize}
\end{definition}
The symbol $ \coDefGr$ in the definition of processes indicates that
the definition is coinductive. This allows 
 infinite processes to be defined without using an explicit
recursion operator.  However, in order to achieve decidability we
focus on regular processes, namely those with a finite number of
distinct subprocesses.   In writing processes, we will omit
trailing $\inact$'s and when $\cardin{I} = 1$ we will omit the choice
symbol. 

A network is a parallel composition of processes, each located
at a different participant. The LTS semantics of networks is specified by the unique rule:
\begin{figure}[h]
 \centerline{$
\begin{array}{c}
\pP{\pp}{\oup\q{i}{I}{\la}{\PP}}\parN \pP{\q}{\inp\pp{j}{J}{\la}{\Q}}\parN\Nt\stackred{\Comm\pp{\la_k}\q}
  \pP{\pp}{\PP_k}\parN \pP{\q}{\Q_k}\parN\Nt~~~\text{where }
   k \in I{\cap}J{~~~~~~\rulename{Comm}}
\end{array}
$}  
\end{figure}
\begin{definition}[Global types]
\mylabel{def:GlobalTypes} 
Global types $\G$  are defined by: 
$\G \coDefGr  \gt\pp\q i I \la \G ~\mid ~\End
$\\
where $I$ is finite non-empty index set and $\la_h\not=\la_k\,$ for
$h\neq k$.
\end{definition}
Here again, $\coDefGr$ indicates that the definition is coinductive,
and we focus on \emph{regular} global types.   We will omit
trailing $\End$'s and when $\cardin{I} = 1$ we will write a global
type $\pp\to\q : \set{\la; \G}$ simply as $\gtCom\pp\q{\la};
\G$. 

A \emph{communication} $\Comm{\pp}{\M}{\q}$ represents the
transmission of label $\M$ on the channel $\pp\q$ from $\pp$ to $\q$.
Communications are ranged over by $\alpha, \beta$. The following
notion of trace will be extensively used in the sequel. 



%
\begin{definition}[Traces]
A \emph{trace} $\comseq, \tau$ is a
finite sequence of communications, i.e.
$\comseq::=\ee\mid\concat\alpha\comseq$.\\
The set of traces is denoted by \Comseq.
\end{definition}
It is useful to define sets of participants also for
communications and traces.  We define
$\participant{\Comm{\pp}{\M}{\q}}=\set{\pp,\q}$, and we lift this
definition
to traces by letting $\participant{\ee} = \emptyset$ and
$\participant{\concat\alpha\comseq} =
\participant{\alpha}\cup\participant{\comseq}$.  

As observed in~\cite{CDG23}, global types may be viewed as trees
whose internal nodes are decorated by
channels $\pp\q$, leaves by $\End$, and edges by labels $\la$.
Given a global type, the sequences of decorations of nodes and edges on
the path from the root to an edge in the tree of the global type are
traces.
We denote by $\FPaths{\G}$ the set of
traces of $\G$. By definition, $\FPaths{\End} = \emptyset$ and each
trace in $\FPaths{\G}$ is non-empty.

The set of {\em participants of a global type $\GP$},
$\participant{\GP}$, 
is 
the union of the sets of participants of its traces, i.e.
$\participant{\GP} = \bigcup_{\comseq \in
    \FPaths{\G}} \participant{\comseq}$.
The regularity assumption ensures that $\participant{\GP}$ is finite
for any $\GP$.

 The semantics of global types is given by the standard LTS presented
in~\refToFigure{ltgt}, where transitions are labelled by
communications. 

%
\begin{figure}[h]
 \centerline{$
\begin{array}{c}
 \gt\pp\q i I \la \G \stackred{\Comm\pp{\la_j}\q}\G_j~~~~~~j\in I{~~~\rulename{Ecomm}}
\quad
 \prooftree
 \G_i\stackred\alpha\G_i' \quad 
 \text{ for all }
i \in I \quad\participant{\alpha}\cap\set{\pp,\q}=\emptyset
 \justifies
 \gt\pp\q i I \la \G \stackred\alpha\gt\pp\q i I \la {\G'} 
 \using ~~~\rulename{Icomm}
  \endprooftree\\ \\
\end{array}
$}
\vspace{-10pt}
\caption{
LTS for global types.
}\mylabel{ltgt}
\end{figure}
%

\begin{figure}[t]
 \centerline{
 $
 \begin{array}{c}
 \\[-1pt]
\proj\G{\pr} = \inact \text{ if }\pr\not\in\participant\G \qquad
\proj{(\gt\pp\q i I \M \RG)}\pr=\begin{cases}
  \inpP\pp{i}{I}{\M}{\proj{\RG_i}\pr}    & \text{if }\pr=\q, \\
    \oupP\q{i}{I}{\M}{\proj{\RG_i}\pr}    & \text{if }\pr=\pp, \\
      \proj{\RG_1}\pr  & \text{if } \pr\not\in\set{\pp, \q}\text{ and } \pr\in\participant{\RG_1}\\
      & \text{ and }\proj{\RG_i}\pr=\proj{\RG_1}\pr\text{ for all } i \in I
\end{cases}\\[13pt]
\end{array}
$
}
\caption{Projection of  global types onto participants.} \mylabel{fig:proj}
\end{figure}
 
 The projection of a global type onto participants is given in
\refToFigure{fig:proj}. As usual, projection is defined only when it
is defined on all participants. Due to the simplicity of our calculus,
the projection of a global type, when defined, is simply a
process. The definition is the standard one
from~\cite{HYC08,HYC16}: the projection of a choice type on the sender
or the receiver yields 
an output choice or an input choice, while its projection on a third
participant is its projection on the continuation of the branch,
which must be equal on all branches. 
Our coinductive definition is
more permissive than the standard one for infinite types
(see~\cite{CDG23}). 

A global type $\G$ is \emph{projectable} if $\proj{\GP}{\pp}$ is defined for
 all $\pp$. 
 
The following property of boundedness for global
types is used to ensure progress.

\begin{definition}[Depth and boundedness]\label{depth}
 The two functions $\depthG{\pp}{\comseq}$ and $\depthG{\pp}{\G}$ are defined by:\\
\centerline{$\begin{array}{l}
 \depthG{\pp}{\comseq} =\begin{cases}
   \cardin{\comseq_1\cdot\alpha}  &\text{ if }\comseq =
\concat{\concat{\comseq_1}{\alpha}}{\comseq_2}\text { and }\pp \notin
\participant{\comseq_1}\text { and }\pp \in \participant{\alpha}\\
  0   & \text{otherwise }
\end{cases}
\\
\depthG{\pp}{\G}=
    \sup  (\{\depthG{\pp}{\comseq}\ |\ \comseq\in\FPaths{\G}\})
  \end{array}
$}
A global type $\GP$ is {\em bounded} if $\depthG{\pp}{\G'}$ is finite
for each participant $\pp$ and each subtree $\G'$ of $\GP$.
 \end{definition}

 
If $\depthG{\pp}{\G}$ is finite, then there is no path in
the tree of $\G$ in which $\pp$ is delayed indefinitely. 
Note that if  
$\depthG{\pp}{\G}$ is finite,
$\G$ may have subtrees $\G'$ for which $\depthG{\pp}{\G'}$ is infinite. 
 \begin{definition} [Well-formed global types] \label{wfs} A global
   type $\GP$ is {\em well formed} if 
   it is projectable and bounded.
 \end{definition}



We conclude this section by recalling the type system for
networks. The unique typing rule for networks is
Rule~\rulename{Net} in \refToFigure{fig:typing}.  It relies on a
preorder on processes, $\PP\leq\Q$, 
meaning that {\em process
  $\PP$ can be used where we expect process
  $\Q$}.  This preorder plays the same role as the standard subtyping
for local types, except that it is invariant for output processes
(rather than covariant). This restriction is imposed in~\cite{CDG23}
in order to obtain bisimilar LTSs for networks and their global types,
a property which in turn is used to prove our main result there
(isomorphism of the configuration domains of the two ESs). 
The preorder rules are interpreted coinductively, since processes may have
infinite (regular) trees.

A network is well typed if 
all its participants
behave as specified by the projections of the same global type
$\G$. Rule \rulename{Net} is standard for MPSTs, so we do not discuss it
further.

%
\begin{figure}[h]
{\small \centerline{$
 \begin{array}{c}
\inact\subt\inact~\rulename{ $\subt$ -$\inact$}\quad \cSinferrule{\PP_i\subt\Q_i ~~~~~i\in I}{\inp\pp{i}{I\cup J}{\M}{\PP}\subt \inp\pp{i}{I}{\M}{\Q}}{\rulename{ $\subt$-In}}
\quad \cSinferrule{\PP_i\subt\Q_i ~~~~~i\in
  I}{\oupTx\pp{i}{I}{\M}{\PP}\subt \oup\pp{i}{I}{\M}{\Q}}{\rulename{
    $\subt$-Out}}\\ \\
\inferrule{\PP_i\subt\proj\GP{\pp_i}~~~~~i\in I~~~~~~\participant\GP\subseteq \set{\pp_i\mid i\in I}}
{\derN{\PiB_{i\in I}\pP{\pp_i}{\PP_i}}\GP} ~\rulename{Net}
\\[13pt]
\end{array}
$}}
\caption{Preorder on processes and  network typing rule.} \mylabel{fig:typing}
\end{figure}
%


\section{Event Structure Semantics of Global Types}
\label{sec:events}
In this section we present the interpretation of global types as Prime
Event Structures, as proposed in our previous work~\cite{CDG23}.  We
start by recalling the definition of \emph{Prime Event Structure}
(PES)  and configuration  from~\cite{NPW81}.  All the following definitions (from
\refToDef{def:permEq} to \refToDef{eg}) are required background taken
from~\cite{CDG23}, with some minor variations.
The new material starts immediately after \refToDef{eg}. 
\begin{definition}[~\cite{NPW81} Prime Event Structure] \label{pes} A
  prime event structure {\rm (PES)} is a tuple $S=(E,\leq, \gr)$ where
  $E$ is a denumerable set of events; $\leq\,\,\subseteq (E\times E)$
  is a partial order relation, called the \emph{causality} relation;
  $\gr\subseteq (E\times E)$ is an irreflexive symmetric relation,
  called the \emph{conflict} relation, satisfying the property of
  \emph{conflict hereditariness}:
  $\forall e, e', e''\in E: e \grr e'\leq e''\Rightarrow e \grr e''$.
\end{definition}

A PES configuration is a set of events that may
have occurred at some stage of computation. 
\begin{definition}[~\cite{NPW81} PES configuration] \label{configP}
  Let $S=(E,\leq, \gr)$ be a prime event structure. A
  configuration of $S$ is a finite subset $\ESet$ of $E$ which is
  (1) downward-closed: \ $e'\leq e \in \ESet \, \ \impl \ \ e'\in
  \ESet$; and (2) conflict-free: $\forall e, e' \in \ESet, \neg (e \gr
e')$.
\end{definition}
The semantics of a PES $S$ is given by its poset of configurations
ordered by set inclusion, where $\ESet_1 \subset \ESet_2$ means that
$S$ may evolve from $\ESet_1$ to $\ESet_2$. 

The events of the PES associated with a global type
will be equivalence classes of particular traces.
%
We introduce some notations for traces $\comseq$. 
%
We denote by $\at{\comseq}{i}$ the $i$-th element of $\comseq$.  If
$i \leq j$, we define
$\range{\comseq}{i}{j} = \at{\comseq}{i} \cdots \at{\comseq}{j}$ to be
the subtrace of $\comseq$ consisting of the $(j-i+1)$ elements
starting from the $i$-th one and ending with the $j$-th one.  If
$i > j$, we convene that $\range{\comseq}{i}{j}$ denotes the empty
trace $\ee$.

  A permutation equivalence on $\Comseq$
  is used to swap
  communications with disjoint participants.
\begin{definition}[Permutation equivalence]\mylabel{def:permEq}
The permutation equivalence on
$\Comseq$ is the least equivalence $\sim$ such that
\\
\centerline{$
\concat{\concat{\concat{\comseq}{\alpha}}{\alpha'}}{\comseq'}\,\sim\,\,
\concat{\concat{\concat{\comseq}{\alpha'}}{\alpha}}{\comseq'}
\quad\text{if}\quad \participant{\alpha}\cap\participant{\alpha'}=\emptyset
$}
We denote by $\eqclass{\comseq}$ the equivalence class of 
$\comseq$, and by $\quotient$ the set of equivalence classes on
$\Comseq$.
\end{definition}
The events of the PES associated with a global type 
%
%
are equivalence classes of particular traces that we call
\emph{pointed}. Intuitively,
a pointed trace ``points to'' its last communication,
in that all the preceding communications in the trace
should cause some subsequent communication in the trace. 
Formally:
\begin{definition}[Pointed  trace]\mylabel{pcs}
  A non empty
  trace $\comseq = \range{\comseq}{1}{n}$ is said to be  \emph{pointed}
  if  
  \\
  \centerline{
    $ ~~\forall i \,. 1\leq i<n, ~\exists j \,. \, i < j \leq n \, .
    ~~ \participant{\at{\comseq}i}\cap
    \participant{\at{\comseq}j}\not=\emptyset $}
\end{definition}
 Note that the condition of \refToDef{pcs} is vacuously satisfied
by any trace of length $n =1$, since in that case there is no $i$
such that $1\leq i<n$. 

Let us also point out that \refToDef{pcs} is slightly different from (but equivalent to) the
definition of pointed trace given in~\cite{CDG23}.

For example,
  let $\alpha_1= \Comm\pp{\la_1}\q, \,\alpha_2= \Comm\pr{\la_2}\ps$ and
  $\alpha_3= \Comm\pr{\la_3}\pp$.  Then $\sigma_1 = \alpha_1$ 
  and $\sigma_3 =
  \concat{\concat{\alpha_1}{\alpha_2}}{\alpha_3}\,$ are pointed
   traces,  while  $\sigma_2 = \concat{\alpha_1}{\alpha_2}\,$ is
  \textit{not} a pointed trace.

  If $\comseq$ is non empty, we use $\last{\comseq}$ to denote the
  last communication of $\comseq$. It is easy to prove that, if
  $\comseq$ is a pointed trace and $\comseq \sim \comseq'$, then
  $\comseq'$ is a pointed trace and $\last{\comseq}=\last{\comseq'}$.
\begin{definition}[Global event]\mylabel{def:glEvent}
  Let $\comseq = \concat{\comseq'}{\alpha}\,$ be a pointed trace.
  Then $\gamma = \eqclass{\comseq}$ is a \emph{global event}, also
  called \emph{g-event}, with communication $\alpha$, notation
  $\comm\gamma=\alpha$. 
 \end{definition}
 Notice that, due to the observation above, 
 ${\sf cm}(\gamma)$ is well defined.  We denote by $\GEs$ the set of
 g-events.
 
 We now introduce an operator that adds a communication $\alpha$ in
 front of a g-event $\comocc$, provided $\alpha$ is a cause of
 some communication in the trace of $\comocc$.
  This ensures that the
 operator always transforms a g-event into another g-event.  We call
 this operator \emph{causal prefixing of a g-event by a
   communication}. 
  \begin{definition}[Causal prefixing of a g-event by a communication]\text{~}\\[-15pt]
  \mylabel{causal-path} 
    \begin{enumerate}
    \item\mylabel{causal-path1} The  {\em causal prefixing}
      of a g-event $\gamma\,$ by a communication $\alpha$
       is defined by:\\
     \centerline{$
      \cau{\alpha}{\comocc}=\begin{cases}
          \eqclass{\concat{\alpha}{\comseq}} & \text{if~ $\comocc
            =\eqclass{\comseq}$~ and~
            $\participant{\alpha}\cap\participant{\comseq} \neq \emptyset$}\\
    \comocc  & \text{otherwise}
\end{cases}
$}
\item\mylabel{causal-path2} The operator $\circ$ naturally extends to  
 traces by:  
$\cau{\emptyseq}{\comocc} = \comocc \quad
\cau{(\concat\alpha\comseq)}\comocc=\cau\alpha{(\cau\comseq\comocc)}$
  \end{enumerate}
\end{definition}
An easy consequence of Clause~\ref{causal-path2} is that
$\cau{(\concat{\comseq'}\comseq)}\comocc=\cau{\comseq'}{(\cau\comseq\comocc)}$
for all $\comseq$ and $\comseq'$.

Using causal prefixing, we can define the mapping ${\sf ev}(\cdot)$  which,
 applied to a trace $\comseq$, yields the g-event representing
the communication $\last\comseq$ prefixed by its causes occurring in
$\comseq$. 
\begin{definition}
\mylabel{event-of-sigma}    The {\em g-event generated by a 
  non-empty  trace  } 
 is defined by: $
 \ev{\concat\comseq\alpha}=\cau\comseq{\eqclass\alpha}
$
 \end{definition}
 Clearly, $\ev{\comseq}$ is a subtrace of $\comseq$ and 
 $\comm{\ev\comseq}=\last\comseq$.
  Observe that the function $\ev{\cdot}$ is not injective
 on the set of traces of a global type. For example, let
  $\G =
 \gtBin{\pp}{\q}{\Seq{\la_1}{\gtCom\pr\ps{\la}}}{\Seq{\la_2}{\gtCom\pr\ps{\la}}}$. Let
  $\comseq_1 = \pp\q\la_1\cdot \pr\ps\la$ and
  $\comseq_2 = \pp\q \la_2 \cdot \pr\ps\la$. Then
  $\comseq_1, \comseq_2 \in \FPaths{\G}$ and $\ev{\comseq_1} =
  \ev{\comseq_2} = \eqclass{\pr\ps\la}$.


  \begin{lemma}
      \label{sibling-events}
      If $\participant{\alpha_1} = \participant{\alpha_2}$ and
      $\ev{\comseq\cdot\alpha_1} = \eqclass{\comseq'\cdot\alpha_1}$,
      then
      $\ev{\comseq\cdot\alpha_2} = \eqclass{\comseq'\cdot\alpha_2}$.
    \end{lemma}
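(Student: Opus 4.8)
The plan is to exploit the fact that causal prefixing (\refToDef{causal-path}) inspects only the \emph{participants} of the communications involved, never their labels. Since $\ev{\comseq\cdot\alpha_i} = \cau{\comseq}{\eqclass{\alpha_i}}$ by \refToDef{event-of-sigma}, and $\participant{\alpha_1}=\participant{\alpha_2}$, the two computations $\cau{\comseq}{\eqclass{\alpha_1}}$ and $\cau{\comseq}{\eqclass{\alpha_2}}$ make exactly the same prefixing decisions at every step. So the first goal is to establish the following claim: there is a trace $\mu$ such that $\cau{\comseq}{\eqclass{\alpha_1}} = \eqclass{\mu\cdot\alpha_1}$ and $\cau{\comseq}{\eqclass{\alpha_2}} = \eqclass{\mu\cdot\alpha_2}$, i.e.\ the two g-events admit representatives that differ only in their final (pointed) communication.

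I would prove this claim by induction on the length of $\comseq$, peeling communications off the front according to the recursive clause $\cau{(\concat{\beta}{\comseq_0})}{\gamma}=\cau{\beta}{(\cau{\comseq_0}{\gamma})}$. The base case $\comseq=\emptyseq$ is immediate, with $\mu=\emptyseq$. For the inductive step, the induction hypothesis gives a common prefix $\mu_0$ with $\cau{\comseq_0}{\eqclass{\alpha_i}}=\eqclass{\mu_0\cdot\alpha_i}$; since $\participant{\mu_0\cdot\alpha_1}=\participant{\mu_0}\cup\participant{\alpha_1}=\participant{\mu_0}\cup\participant{\alpha_2}=\participant{\mu_0\cdot\alpha_2}$, the side condition $\participant{\beta}\cap\participant{\mu_0\cdot\alpha_i}\neq\emptyset$ of \refToDef{causal-path} holds for $i=1$ exactly when it holds for $i=2$. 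In the positive case both results gain the same new prefix element $\beta$ (new common prefix $\beta\cdot\mu_0$), and in the negative case both are left unchanged (common prefix $\mu_0$); either way the claim is preserved.

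It remains to combine the claim with the hypothesis. From $\ev{\comseq\cdot\alpha_1}=\eqclass{\comseq'\cdot\alpha_1}$ and the claim we get $\mu\cdot\alpha_1\sim\comseq'\cdot\alpha_1$, with both traces pointed (the former as a representative of a g-event, the latter because pointedness is $\sim$-invariant). Here lies the main obstacle: I must transport this equivalence across the substitution $\alpha_1\mapsto\alpha_2$ to obtain $\mu\cdot\alpha_2\sim\comseq'\cdot\alpha_2$, which then yields $\ev{\comseq\cdot\alpha_2}=\eqclass{\mu\cdot\alpha_2}=\eqclass{\comseq'\cdot\alpha_2}$ as desired. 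The key observation is that the final communication of a pointed trace can never take part in a swap: a swap (\refToDef{def:permEq}) exchanges adjacent communications with disjoint participant sets, whereas pointedness forces the communication immediately preceding the point to share a participant with it. Consequently, along any swap derivation witnessing $\mu\cdot\alpha_1\sim\comseq'\cdot\alpha_1$ every intermediate trace is pointed (again by $\sim$-invariance of pointedness), the point $\alpha_1$ stays fixed in the last position, and all swaps occur strictly inside the prefix. Replacing that fixed final $\alpha_1$ by $\alpha_2$ in every trace of the derivation leaves each swap valid---the exchanged communications and their participants are untouched---so the substituted derivation witnesses $\mu\cdot\alpha_2\sim\comseq'\cdot\alpha_2$, completing the argument.
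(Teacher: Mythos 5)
The paper itself offers nothing to compare against here: it explicitly states that all lemmas are given without proofs, and Lemma~\ref{sibling-events} is indeed stated bare. Judged on its own merits, your proof is correct and complete. The two ingredients it rests on are exactly the right ones: (1) the prefixing decisions made by $\circ$ (\refToDef{causal-path}) depend only on participant sets, so your induction on $\comseq$ correctly yields a \emph{common} prefix $\mu$ with $\ev{\comseq\cdot\alpha_i}=\eqclass{\mu\cdot\alpha_i}$ for $i\in\set{1,2}$; and (2) the transport of $\mu\cdot\alpha_1\sim\comseq'\cdot\alpha_1$ to $\mu\cdot\alpha_2\sim\comseq'\cdot\alpha_2$ is sound because in a pointed trace of length $n$ the pointedness condition at $i=n-1$ forces $\participant{\at{\comseq}{n-1}}\cap\participant{\at{\comseq}{n}}\neq\emptyset$, so no single swap in a derivation can ever involve the last position (every intermediate trace being pointed by $\sim$-invariance, which the paper states explicitly), and substituting $\alpha_2$ for the immobile final $\alpha_1$ leaves every swap untouched. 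You correctly identified that the transport step is the real content of the lemma --- a naive substitution argument without the pointedness observation would fail, since in an arbitrary $\sim$-derivation the last element could a priori migrate into the interior and back.
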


  %
  %
    
%
      %
      
We proceed now to define the causality and conflict relations on
g-events.

\begin{definition}[Causality and conflict relations on g-events] \label{sgeo}
The {\em causality} relation $\precP$ and the {\em conflict} relation $\gr$ on the set of g-events $\GEs$ are defined by:
\begin{enumerate}
\item\mylabel{sgeo1} $\comocc\precP\comocc'$
~if~$\comocc=\eqclass\comseq$ and $\comocc'=\eqclass{\concat\comseq{\comseq'}}$ for some $\comseq,\comseq'$;\\[-10pt]
\item\mylabel{sgeo2}
$\comocc\,\grr\,\comocc'$~if~$\comocc =
\eqclass{\comseq\cdot\pp\q\la_1\cdot \comseq_1}$~and~$\comocc' =
\eqclass{\comseq\cdot\pp\q\la_2\cdot \comseq_2}$
  for some $\comseq, \comseq_1,
  \comseq_2, \pp,\q,\la_1,\la_2$ where $\la_1\neq\la_2$.
\end{enumerate}
\end{definition}
If
$\comocc=\eqclass{\concat{\concat\comseq{\alpha}}{\concat{\comseq'}{\alpha'}}}$,
then the communication $\alpha$ must be done before the communication
$\alpha'$. This is expressed by the causality
$\eqclass{\concat\comseq{\alpha}}\precP\comocc$.  An example is
$\eqclass{\Comm\pp\la\q}\precP\eqclass{\concat{\Comm\pr{\la'}\ps}{\concat{\Comm\pp\la\q}{\Comm\ps{\la''}\q}}}$. 
As regards the conflict relation, an example is
$\eqclass{\concat{\concat{\Comm\pr\la\ps}{\Comm\pp{\la_1}\q}}{\Comm\q{\la}\pr}}\gr
\eqclass{\concat{\Comm\pp{\la_2}\q}{\Comm\pr{\la}\ps}}$, since
$\concat{\Comm\pp{\la_2}\q}{\Comm\pr{\la}\ps}\sim
\concat{\Comm\pr{\la}\ps}{\Comm\pp{\la_2}\q}$. 
\begin{definition}[~\cite{CDG23} Event structure of a global type]
  \mylabel{eg} The {\em event structure of the global type } $\GP$ is
  the triple $\ESG{\GP} = (\EGG(\GP), \precP_\GP , \grr_\GP)$
 where: 
 $\EGG(\GP) =
\set{ \ev{\comseq}\ |\ \comseq\in\FPaths{\G}}$   and $\precP_\GP$ and
$\gr_\GP$ are the restrictions of $\precP$ and $\gr$ to
$\EGG(\GP)$.
\end{definition} 

When clear from the context, we shall omit the subscript
$\G$ in the relations $\precP_\G$ and $\gr_\G$. 

In the sequel, a PES obtained
from a global type by~\refToDef{eg} will often be called a g-PES. 

It should be stressed that \refToDef{eg} only makes sense
for global types that are projectable. Such global
types are guaranteed to be \emph{realisable} by
some distributed implementation, i.e., to type some network.
For these global types it has been shown in~\cite{CDG23} that
the semantics in~\refToDef{eg} preserves and reflects the operational semantics,
namely that $\G$ performs a transition sequence labelled by a trace $\sigma$
in the LTS of~\refToFigure{ltgt} if and only if the associated PES admits the
configuration
$X = \set{\ev{\comseq'} \  | \ \comseq' \sqsubseteq \comseq}$.


  \begin{example}\label{non-realisable-global-type}
The global type    $\G=\pp\to\q : \set{\la_1; \gtCom\pr\ps{\la_3}, \la_2; \gtCom\pr\ps{\la_3}}$ 
  is projectable, with projections:
  \[\proj{\G}{\pp}=\PP=\sendL{\q}{\la_1} \oplus\sendL{\q}{\la_2} \qquad
\proj{\G}{\q}=\Q=\rcvL{\pp}{\la_1} + \rcvL{\pp}{\la_2} \qquad
\proj{\G}{\pr}= \R =\sendL{\ps}{\la_3} \qquad
\proj{\G}{\ps}= U = \rcvL{\pr}{\la_3}
  \]
  Clearly, $\G$ types the network
  $\pP{\pp}{\PP}\parN \pP{\q}{\Q} \parN \pP{\pr}{\R} \parN
  \pP{\ps}{U}$.  The PES $S$ associated with $\G$ by~\refToDef{eg} has
  three events
  $\comocc_1= \eqclass{\pp\q\la_1}, \comocc_2 = \eqclass{\pp\q\la_2},
  \comocc_3= \eqclass{\pr\ps\la_3}$, with $\precP_\G = Id$ and
  $\comocc_1 \gr_\G \comocc_2$.

  Consider now the global type
 $\G' =\pp\to\q : \set{\la_1; \End, \la_2; \gtCom\pr\ps{\la_3}}$, 
      where the first branch of the choice
      has no continuation. Clearly, $\G'$ 
    is \emph{not} projectable. However, \refToDef{eg} associates the
   same PES $S$ with $\G'$, whereas $\G'$ and
   $S$ do not have the same operational semantics,
   since $\G' \stackred{\Comm\pp{\la_1}\q}\End$ while
   in the PES $S$ the configuration $\ESet = \set{\comocc_1}$ can be extended
   to the configuration $\ESet' = \set{\comocc_1, \comocc_3}$.
\end{example}


 Our PES interpretation of global types 
explicitly brings out the
concurrency between communications that is left implicit in the syntax
of global types. 
 We prove now  that all well-formed global
types that type the same network yield the same PES 
(\refToTheorem{unique-PES-interpretation}). 
We start by  
proving a weaker theorem, 
which follows from
   results established in~\cite{NPW81}
   and~\cite{CDG23}. 
   We say that two well-formed global types $\G$
   and $\G'$ are \emph{equivalent} if $\derN\Nt\G$ and $\derN\Nt\G'$
   for some network $\Nt$.
   Let $\cong$ denote PES
   isomorphism.
   \begin{theorem}  [ Equivalent well-formed
   global types yield isomorphic PESs]
   \mylabel{isomorphic-PES-interpretation}
   Let  $\G$ and $\G'$ be well-formed global types.
   If $\derN\Nt\G$ and $\derN\Nt\G'$ for some network $\Nt$, then $\ESG{\G}
   \cong \ESG{\G'}$.
  \end{theorem}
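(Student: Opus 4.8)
The plan is to leverage the two background results cited: the operational correspondence from~\cite{CDG23} and a general result from~\cite{NPW81} connecting configuration posets with PES isomorphism. The key observation is that $\ESG{\G}$ is entirely determined by its poset of configurations, and that this poset is, by the operational correspondence, in bijection with the reachable states of $\G$ under the LTS of~\refToFigure{ltgt}. So if I can show that equivalent well-formed global types $\G$ and $\G'$ generate the same (or isomorphic) configuration posets, the PES isomorphism will follow.

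\begin{proof}
Since $\G$ and $\G'$ are well formed, they are in particular projectable, so \refToDef{eg} applies meaningfully to both and the operational correspondence recalled after \refToDef{eg} holds: for a projectable global type $\G$, a trace $\sigma$ labels a transition sequence of $\G$ in the LTS of~\refToFigure{ltgt} if and only if the set $X_\sigma = \set{\ev{\comseq'} \mid \comseq' \sqsubseteq \sigma}$ is a configuration of $\ESG{\G}$, and moreover every configuration arises in this way. Hence the poset of configurations of $\ESG{\G}$ ordered by inclusion is isomorphic to the poset of reachable states of $\G$ under the LTS, ordered by the reachability (prefix) ordering on the generating traces modulo permutation equivalence $\sim$; the same holds for $\G'$.
\end{proof}

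First I would establish that $\derN\Nt\G$ and $\derN\Nt\G'$ force $\G$ and $\G'$ to have the same LTS behaviour. By Rule~\rulename{Net}, both type $\Nt$ means each participant's process $\PP_i$ satisfies $\PP_i \subt \proj{\G}{\pp_i}$ and $\PP_i \subt \proj{\G'}{\pp_i}$. The invariance of the preorder on outputs (emphasised in the excerpt as the reason for choosing $\subt$ over standard subtyping) is exactly what guarantees that the network $\Nt$ and its typing global type have bisimilar LTSs, as established in~\cite{CDG23}. Thus both $\G$ and $\G'$ are bisimilar to $\Nt$, and hence bisimilar to each other; the bisimulation matches communications labels, so it induces a bijection on the respective sets of reachable traces that preserves prefix ordering and permutation equivalence. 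The hard part will be arguing that this trace-level bijection descends to an isomorphism of configuration posets rather than just a bisimulation of transition systems: I would need to check that equivalent traces (under $\sim$) are mapped to equivalent traces, so that the map is well defined on $\quotient$-classes, and that it commutes with the $\ev{\cdot}$ construction.

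With the configuration posets of $\ESG{\G}$ and $\ESG{\G'}$ shown isomorphic, I would invoke the representation result from~\cite{NPW81}: a PES can be reconstructed from its poset of configurations (its domain), so that two PESs with isomorphic configuration posets are themselves isomorphic as PESs. Concretely, the events correspond to the complete primes of the domain and both causality and conflict are recoverable from the inclusion order, so a domain isomorphism lifts to a PES isomorphism $\ESG{\G} \cong \ESG{\G'}$. I expect the genuine obstacle to lie not in this final lifting step, which is a citation, but in the middle step linking the two syntactic objects through $\Nt$: making precise that the LTS bisimulation respects permutation equivalence and the $\ev{\cdot}$ mapping, so that it yields a poset isomorphism and not merely a behavioural equivalence. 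This is why the theorem is only stated as isomorphism rather than equality --- the stronger equality $\ESG{\G} = \ESG{\G'}$ (\refToTheorem{unique-PES-interpretation}) presumably requires the finer analysis that the subsequent development provides.
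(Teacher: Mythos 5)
Your proposal is correct in outline and shares the paper's top-level skeleton---show the two configuration domains are isomorphic, then lift that to a PES isomorphism via the representation theorem of \cite{NPW81}---but you establish the middle step by a genuinely different route. The paper's proof is pure citation-gluing: by Theorem 8.18 of \cite{CDG23}, the configuration domain of $\ESG{\G}$ (resp.\ $\ESG{\G'}$) is isomorphic to the configuration domain of the Flow Event Structure of $\Nt$, hence the two domains are isomorphic to each other, and \cite{NPW81} concludes. You instead work operationally: both types are bisimilar to the LTS of $\Nt$, hence trace-equivalent to each other, and configurations of $\ESG{\G}$ correspond to $\sim$-classes of traces. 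This is essentially the technique the paper reserves for the stronger \refToTheorem{unique-PES-interpretation}, whose proof goes through Session Fidelity, Subject Reduction and \refToLemma{traces-vs-events} rather than through the network's Flow Event Structure. Two remarks on what each approach buys. First, your route needs more than the correspondence quoted after \refToDef{eg} literally provides: that statement says $\G\stackred{\comseq}$ iff $X_\comseq=\set{\ev{\comseq'}\mid \comseq'\sqsubseteq\comseq}$ is a configuration, whereas you also need that \emph{every} configuration arises as some $X_\comseq$ and that $\comseq\mapsto X_\comseq$ descends to an order isomorphism on $\sim$-classes; you flag both points, and they are fillable from \cite{CDG23}, but they are real work that the paper's proof avoids entirely. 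Second, your closing remark undersells your own argument: since g-events are built purely syntactically from traces, trace equivalence of $\G$ and $\G'$ makes the event sets and configuration families literally \emph{identical}, not merely isomorphic (the ``bijection'' induced by the bisimulation is just the identity on trace sets). So your route, carried to completion, proves the equality $\ESG{\G}=\ESG{\G'}$ directly and makes the detour through \cite{NPW81} unnecessary; the ``finer analysis'' you defer to the later theorem is exactly the argument you have already sketched.
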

  \begin{proof}
     It was shown in~\cite{CDG23} (Theorem 8.18 p 25) that if
    $\derN\Nt\G$ then the domain of configurations of $\ESG{\G}$ is
    isomorphic to the domain of configurations of the  Flow Event
    Structure  associated with $\Nt$.
    Then, from $\derN\Nt\G$ and $\derN\Nt\G'$ it follows that
    $\ESG{\G}$ and $\ESG{\G'}$ have isomorphic domains of
    configurations.
    A classical result in~\cite{NPW81} (Theorem 9 p.~102) establishes
    that a PES $S$ is isomorphic to the PES whose events are the prime
    elements of the domain of configurations of $S$, with 
    causality relation given by set inclusion and 
    conflict relation given by set inconsistency.
    %
    We conclude that $\ESG{\G} \cong \ESG{\G'}$.  
    %
%
%
\end{proof}

We now wish to go
a step further by showing that
$\ESG{\G}$ and $\ESG{\G'}$ are actually \emph{identical} PESs. 
We write $\G\stackred{\sigma}\G'$ if $\G \stackred{\alpha_1}\G_1 \cdots
\stackred{\alpha_n}\G_n$ where $\alpha_1\cdots\alpha_n = \sigma$ and
$\G_n=G'$, and $\G\stackred{\sigma}$ 
if there exists $\G'$
such that $\G\stackred{\sigma}\G'$.
A similar notation will be used for transition sequences of a network $\Nt$. 

%
 Our next theorem relies on the following key lemma. 

\begin{lemma}
  \mylabel{traces-vs-events} Let $\G$ be a global type and
  $\comseq$ be a
  trace. Then:
  \begin{enumerate}
    \item \label{p1} $\comseq\in\FPaths{\G}$ implies $\G\stackred{\sigma}\,$;
    \item \label{p2} $\G\stackred{\sigma}$ 
      implies $\ev{\comseq}\in
\EGG(\G)$.
     \end{enumerate} 
    \end{lemma}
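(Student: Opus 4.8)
The plan is to prove the two parts separately, each by induction on the length of $\comseq$, with the second part resting on an auxiliary claim about single transitions that isolates the one delicate point, namely rule \rulename{Icomm}.

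First I would dispatch Part~\ref{p1}, which is the straightforward direction and uses only \rulename{Ecomm}. If $\comseq\in\FPaths{\G}$ then $\G$ is necessarily a choice $\gt\pp\q i I \la \G$ (since $\FPaths{\End}=\emptyset$) and $\comseq$ is a root-to-edge path, so it decomposes as $\comseq=\concat{\Comm\pp{\la_j}\q}{\comseq'}$ with $j\in I$ and either $\comseq'=\emptyseq$ or $\comseq'\in\FPaths{\G_j}$. Rule \rulename{Ecomm} of \refToFigure{ltgt} gives $\G\stackred{\Comm\pp{\la_j}\q}\G_j$, and the induction hypothesis gives $\G_j\stackred{\comseq'}$; concatenating the two transition sequences yields $\G\stackred{\comseq}$. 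No reordering — hence no use of \rulename{Icomm} — is needed, since a syntactic path follows the tree directly.

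For Part~\ref{p2} the plan is to first prove the following auxiliary claim for a single transition $\G\stackred\alpha\G'$: \emph{(i)} $\eqclass\alpha\in\EGG(\G)$, and \emph{(ii)} for every $\tau'\in\FPaths{\G'}$ there is $\tau\in\FPaths{\G}$ with $\ev\tau=\ev{\concat\alpha{\tau'}}$. Granting this, Part~\ref{p2} follows by induction on $|\comseq|$: writing a non-empty $\comseq$ as $\concat\alpha{\comseq''}$ with $\G\stackred\alpha\G'\stackred{\comseq''}$, the base case $\comseq''=\emptyseq$ is exactly \emph{(i)}, while for $\comseq''\neq\emptyseq$ the induction hypothesis supplies $\tau'\in\FPaths{\G'}$ with $\ev{\tau'}=\ev{\comseq''}$, and then \emph{(ii)} together with the identity $\ev{\concat\alpha\comseq}=\cau\alpha{\ev\comseq}$ — an immediate consequence of \refToDef{causal-path} and \refToDef{event-of-sigma} — gives $\ev\tau=\cau\alpha{\ev{\tau'}}=\cau\alpha{\ev{\comseq''}}=\ev\comseq$, so that $\ev\comseq=\ev\tau\in\EGG(\G)$.

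The auxiliary claim I would prove by induction on the derivation of $\G\stackred\alpha\G'$, and this is where the real work lies. The \rulename{Ecomm} case is immediate: there $\alpha=\Comm\pp{\la_j}\q$ is the root communication and $\G'=\G_j$, so $\alpha\in\FPaths{\G}$ settles \emph{(i)}, and prefixing any $\tau'\in\FPaths{\G_j}$ with $\alpha$ gives the required $\tau=\concat\alpha{\tau'}$ for \emph{(ii)}. The \rulename{Icomm} case is the main obstacle and the only place exploiting the independence condition $\participant\alpha\cap\set{\pp,\q}=\emptyset$: here $\G=\gt\pp\q i I \la \G$, $\G'=\gt\pp\q i I \la {\G'}$ and $\G_i\stackred\alpha\G_i'$ for all $i$, so the induction hypothesis is available on each branch. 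For \emph{(i)} I would apply it to one branch to obtain $\rho\in\FPaths{\G_1}$ with $\ev\rho=\eqclass\alpha$ and take $\tau=\concat{\Comm\pp{\la_1}\q}\rho$; since the root participants $\set{\pp,\q}$ are disjoint from $\participant\alpha$, the root communication is not a cause, so causal prefixing leaves the event unchanged and $\ev\tau=\eqclass\alpha$. For \emph{(ii)} I would decompose $\tau'$ as $\concat{\Comm\pp{\la_j}\q}{\rho'}$, lift $\rho'$ through $\G_j\stackred\alpha\G_j'$ by the induction hypothesis, and prefix again with $\Comm\pp{\la_j}\q$ (the degenerate case $\rho'=\emptyseq$ being settled by taking $\tau=\Comm\pp{\la_j}\q$). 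The resulting equality reduces to showing that causal prefixing by the two \emph{independent} communications $\alpha$ and $\Comm\pp{\la_j}\q$ commutes, i.e.\ $\cau{\Comm\pp{\la_j}\q}{(\cau\alpha\gamma)}=\cau\alpha{(\cau{\Comm\pp{\la_j}\q}\gamma)}$ whenever $\participant\alpha\cap\set{\pp,\q}=\emptyset$. This commutation is the technical heart of the proof; I expect it to be the hardest step, and I would isolate it as a small lemma proved by a short case analysis on whether each communication is a cause in $\gamma$, using that disjointness of participants makes the ``is-a-cause'' status of one communication insensitive to prefixing by the other, and that when both are causes the identity follows from $\concat{\Comm\pp{\la_j}\q}\alpha\sim\concat\alpha{\Comm\pp{\la_j}\q}$. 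Once this is in place the \rulename{Icomm} case, and with it the whole lemma, goes through.
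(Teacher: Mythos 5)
There is no proof in the paper to compare yours against: the paper explicitly states that all lemmas are stated without proofs, and \refToLemma{traces-vs-events} is one of them. Judged on its own, your argument is correct and complete relative to the paper's definitions. Part~\ref{p1} indeed needs only \rulename{Ecomm}, since a trace in $\FPaths{\G}$ follows the tree of $\G$ from the root. For Part~\ref{p2}, the reduction to your single-step auxiliary claim is sound (using the identity $\ev{\concat{\alpha}{\comseq}} = \cau{\alpha}{\ev{\comseq}}$ for non-empty $\comseq$), and the commutation you isolate as the technical heart does hold: writing $\gamma = \eqclass{\comseq}$ and $\beta = \Comm\pp{\la_j}\q$, disjointness of $\participant{\alpha}$ and $\participant{\beta}$ gives $\participant{\beta}\cap\participant{\concat{\alpha}{\comseq}}\neq\emptyset$ if and only if $\participant{\beta}\cap\participant{\comseq}\neq\emptyset$, so whether each of $\alpha$, $\beta$ gets prefixed by $\circ$ is unaffected by the presence of the other; in the only nontrivial case, where both are prefixed, $\eqclass{\concat{\beta}{\concat{\alpha}{\comseq}}} = \eqclass{\concat{\alpha}{\concat{\beta}{\comseq}}}$ follows directly from \refToDef{def:permEq}. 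Two points deserve an explicit sentence in a final write-up: (a) your induction on the derivation of $\G\stackred{\alpha}\G'$ is legitimate because the LTS rules of \refToFigure{ltgt} are interpreted inductively (finite derivation trees), even though global types themselves are coinductive objects; (b) Part~\ref{p2} only makes sense for non-empty $\comseq$, as $\ev{\cdot}$ is defined only on non-empty traces, which your case split already respects.
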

%
%
    \begin{theorem} [ Equivalent well-formed
   global types yield identical PESs]
   \mylabel{unique-PES-interpretation}\  \
   Let $\G$ and $\G'$ be 
 {well}-formed global types.
   If $\derN\Nt\G$ and $\derN\Nt\G'$ for some network $\Nt$, then $\ESG{\G}
   = \ESG{\G'}$.
  \end{theorem}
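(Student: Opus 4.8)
The plan is to reduce the equality of the two event structures to the equality of their underlying event sets, and then to obtain the latter from the fact that $\G$ and $\G'$ perform exactly the same transition traces. First I would observe that, by \refToDef{eg}, the relations $\precP_\G$ and $\grr_\G$ are merely the restrictions to $\EGG(\G)$ of the \emph{global} relations $\precP$ and $\grr$ of \refToDef{sgeo}, and these global relations are fixed on the whole set of g-events $\GEs$, independently of any particular global type. Consequently, as soon as we establish $\EGG(\G) = \EGG(\G')$, the restricted relations $\precP_\G,\grr_\G$ and $\precP_{\G'},\grr_{\G'}$ coincide, so that the triples $\ESG{\G}$ and $\ESG{\G'}$ are literally identical. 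The whole theorem thus boils down to proving $\EGG(\G) = \EGG(\G')$. This is genuinely stronger than the isomorphism of \refToTheorem{isomorphic-PES-interpretation}: the reconstruction of a PES from its domain used there produces abstract prime elements, not the specific trace-based g-events of \refToDef{eg}.

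The second step is to give an operational description of the event set. Using the two parts of \refToLemma{traces-vs-events}, I would show
\[ \EGG(\G) = \set{\ev{\comseq}\mid \G\stackred{\comseq}\text{ and }\comseq\neq\ee}. \]
The inclusion $\subseteq$ follows from Part~\ref{p1}: every $e\in\EGG(\G)$ equals $\ev{\comseq}$ for some $\comseq\in\FPaths{\G}$, and such $\comseq$ is non-empty and satisfies $\G\stackred{\comseq}$. The inclusion $\supseteq$ follows from Part~\ref{p2}: any non-empty $\comseq$ with $\G\stackred{\comseq}$ yields $\ev{\comseq}\in\EGG(\G)$. The identical characterisation holds for $\G'$.

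It then remains to prove that $\G$ and $\G'$ admit exactly the same transition traces, i.e.\ $\G\stackred{\comseq}$ iff $\G'\stackred{\comseq}$ for every $\comseq$. Here I would invoke the result of~\cite{CDG23} that a well-formed global type is bisimilar to every network it types: since $\G$ and $\G'$ are well formed and both type $\Nt$, the network $\Nt$ is bisimilar to $\G$ and to $\G'$, hence $\G$ and $\G'$ are bisimilar and in particular trace equivalent, $\Traces{\G}=\Traces{\Nt}=\Traces{\G'}$. Feeding this equality into the operational characterisation of the event sets gives $\EGG(\G)=\EGG(\G')$, and the first step then yields $\ESG{\G}=\ESG{\G'}$.

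As for the difficulty, the restriction argument and the two inclusions are routine once \refToLemma{traces-vs-events} is in hand, so the only substantial ingredient is the passage from ``$\G$ and $\G'$ type the same network'' to ``$\G$ and $\G'$ are trace equivalent''. I expect this to be the main point to get right, and the cleanest route is to lean on the network/global-type bisimilarity established in~\cite{CDG23}; a self-contained alternative would require showing directly that the typing relation determines the trace set of the global type, which is considerably more laborious.
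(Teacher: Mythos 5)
Your proposal is correct and takes essentially the same route as the paper's proof: reduce the identity of the PESs to the equality $\EGG(\G)=\EGG(\G')$ (since causality and conflict are restrictions of globally defined relations), use \refToLemma{traces-vs-events} to pass between events and transition traces, and transfer traces from $\G$ to $\G'$ through the common network $\Nt$ using results of~\cite{CDG23}. The only cosmetic difference is that the paper performs the transfer by citing Session Fidelity ($\G\stackred{\comseq}$ implies $\Nt\stackred{\comseq}$) and Subject Reduction ($\Nt\stackred{\comseq}$ implies $\G'\stackred{\comseq}$) separately, whereas you invoke the network/global-type bisimilarity of~\cite{CDG23}, which is the same machinery packaged as a single (stronger) statement.
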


\begin{proof}
  It is enough to show that $\ESG{\G}$ and $\ESG{\G'}$ have exactly
  the same sets of events, 
  since events are defined syntactically and the relations of
  causality and conflict can be extracted from their syntax.

    We prove that $\EGG(\G) = \EGG(\G')$.  Let
    $e \in \EGG(\G)$. 
    By \refToDef{eg}  there exists
    $\comseq \in\FPaths{\G}$  such that  $\ev{\comseq} = e$. 
           Then  $\G\stackred{\comseq}\,$ 
    by~\refToLemma{traces-vs-events}(\ref{p1}), from which we deduce
     $\Nt\stackred{\comseq}\,$  by the Session Fidelity result in
    \cite{CDG23} (Theorem 6.11 p.~16). Then 
    $\G'\stackred{\comseq}\,$ 
  by
    the Subject Reduction result in \cite{CDG23} (Theorem 6.10
    p.~16).
    By~\refToLemma{traces-vs-events}(\ref{p2}) this implies 
    $\ev{\comseq}\in \EGG(\G')$, i.e., $e\in \EGG(\G')$. 

\end{proof}


\section{Semantic Well-formedness}
\mylabel{sec:semantic-wf}
We now investigate semantic counterparts for the syntactic
well-formedness property of global types. Recall that type
well-formedness is the conjunction of two properties: projectability
and boundedness.  We start by defining a notion of \emph{semantic
  projectability} for PESs.  We first give some auxiliary definitions.
\begin{definition}\mylabel{def:incf}
  Let $\G$ be a global type and
  $\ESG{\GP} = (\EGG(\GP), \precP, \grr)$.  Two events
  $\comocc_1,\comocc_2\in\EGG(\GP)$ are in {\em initial conflict},
  $\comocc_1 \grin \comocc_2$, if
  $\comocc_1=\eqclass{\concat\comseq{\Comm{\pp}{\M_1}{\q}}}$ and
  $\comocc_2=\eqclass{\concat\comseq{\Comm{\pp}{\M_2}{\q}}}$ for some
  $\comseq, \pp, \q, \M_1, \M_2$ such that $\M_1\neq\M_2$.
\end{definition}

\begin{definition}[Projection of traces on participants]
\mylabel{def:projection}
The {\em projection of
  $\comseq$ on $\pr$}, $\projS\comseq\pr$, is defined by:\\
\centerline{$\projS{\ee}\pr=\ee\quad\quad
\projS{(\Comm\pp\la\q\cdot\comseq)}\pr=\begin{cases}
      \sendL\q\la\cdot\projS{\comseq}\pr & \text{if }\pr=\pp\\
      \rcvL\pp\la\cdot\projS{\comseq}\pr & \text{if }\pr=\q\\
     \projS{\comseq}\pr& \text{if }\pr\not\in\set{\pp,\q}
  \end{cases}
$}
\end{definition}
\begin{definition}[Semantic projectability]
  \label{def:semproj}
  Let $\G$ be a global type and
$\ESG{\GP} = (\EGG(\GP), \precP, \grr)$. We say that $\ESG{\GP}$ is {\em semantically projectable} if
  for all $\comocc_1, \comocc_2 \in \EGG(\GP)$ in initial
   conflict: 
\begin{itemize}
\item[] if there is $\comocc'_1=\eqclass{\concat{\comseq_1}{\alpha_1}}$
  with $\comocc_1\precP_\GP\comocc'_1$ and
  $\pr\in\participant{\alpha_1} \backslash \participant{{\sf cm}(\comocc_1)}$,\\[-10pt]
\item[] then there is $\comocc'_2=\eqclass{\concat{\comseq_2}{\alpha_2}}$
 with  $\comocc_2\precP_\GP\comocc'_2$ and
$\alpha_2 = \alpha_1 $ and  $\projS{\comseq_2}\pr=\projS{\comseq_1}\pr$.
\end{itemize}
\end{definition}
Note that  if   $\G$  is semantically projectable, then also
 any subterm $\GP'$  of $\G$ is 
semantically projectable. 

We now show that, if a global type $\G$ is projectable, then
all initial conflicts between two participants $\pp$ and $\q$ in the event
structure $\ESG{\GP}$ reflect branching points between $\pp$ and $\q$
in the tree of $\G$.
In general, the mapping from branching points in
the tree of $\G$ to initial conflicts in $\ESG{\GP}$ is not injective,
namely, there may be several branching points in the tree of $\G$ that
give rise to the same initial conflict in $\ESG{\GP}$.
\begin{lemma}[Initial conflicts in $\ESG{\GP}$ reflect
  branching points in the tree of $\G$]
  \label{initial-conflicts-and-branching-points}
 Let $\G$ be a global type and
 $\ESG{\GP} = (\EGG(\GP), \precP, \grr)$.  Let
 $\comocc_1, \comocc_2 \in \EGG(\GP)$ be in initial conflict and
 $\comocc_i=\eqclass{\concat\comseq{\alpha_i}}$ for $i\in\set{1,2}$.
 If $\GP$ is projectable then there exists $\comseq'$ such that
 $\comseq'\cdot\alpha_i\in\FPaths{\G}$ and
 $\ev{\comseq'\cdot\alpha_i}=
 \eqclass{\comseq\cdot\alpha_i}$ for $i\in\set{1,2}$.
\end{lemma}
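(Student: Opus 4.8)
The plan is to reduce the statement to a single syntactic branching node by producing an explicit witness $\comseq'$ and then transporting the $\ev{\cdot}$-equality from $\alpha_1$ to $\alpha_2$ via \refToLemma{sibling-events}. Since $\comocc_1,\comocc_2\in\EGG(\G)$, \refToDef{eg} provides paths in $\FPaths{\G}$ whose images under $\ev{\cdot}$ are $\comocc_1$ and $\comocc_2$; as $\comm{\comocc_i}=\alpha_i=\Comm{\pp}{\la_i}{\q}$, each such path ends with $\alpha_i$ and may be written $\comseq_i\cdot\alpha_i\in\FPaths{\G}$ with $\ev{\comseq_i\cdot\alpha_i}=\comocc_i=\eqclass{\comseq\cdot\alpha_i}$. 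I would then take $\comseq':=\comseq_1$, so that the two requirements for $i=1$, namely $\comseq'\cdot\alpha_1\in\FPaths{\G}$ and $\ev{\comseq'\cdot\alpha_1}=\eqclass{\comseq\cdot\alpha_1}$, hold by construction. It remains to establish the single fact $\comseq_1\cdot\alpha_2\in\FPaths{\G}$: once this is in hand, \refToLemma{sibling-events}, applied with $\participant{\alpha_1}=\participant{\alpha_2}=\set{\pp,\q}$ and $\ev{\comseq_1\cdot\alpha_1}=\eqclass{\comseq\cdot\alpha_1}$, yields $\ev{\comseq_1\cdot\alpha_2}=\eqclass{\comseq\cdot\alpha_2}$, closing the case $i=2$.

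First I would show that $\comseq_1$ and $\comseq_2$ have the same projection on $\pp$. Unfolding the definition of $\cau{\comseq_i}{\eqclass{\alpha_i}}=\eqclass{\comseq\cdot\alpha_i}$ from \refToDef{causal-path} (processing $\comseq_i$ from right to left), the communications of $\comseq_i$ that survive in $\comseq$ are exactly the transitive causes of $\alpha_i$; in particular every communication of $\comseq_i$ that mentions $\pp$ shares the participant $\pp$ with the always-retained seed $\alpha_i$, hence is itself retained. Since the projection on $\pp$ discards communications not involving $\pp$ and is invariant under $\sim$, this gives $\projS{\comseq_i}{\pp}=\projS{\comseq}{\pp}$, whence $\projS{\comseq_1}{\pp}=\projS{\comseq_2}{\pp}$ because the trace $\comseq$ is common to both initial-conflict representatives.

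The core of the argument, and the only place where projectability is consumed, is a traversal/projection correspondence: for projectable $\G$, if $\comseqA\cdot\beta\in\FPaths{\G}$ leads from the root to a node with subtree $\G_n$, then the process $\proj{\G}{\pp}$, after performing the action sequence $\projS{\comseqA}{\pp}$, reaches $\proj{\G_n}{\pp}$. I would prove this by induction on the length of $\comseqA$, reading off \refToFigure{fig:proj}: a communication with sender or receiver $\pp$ contributes the matching output or input action and selects the corresponding continuation, whereas a communication not involving $\pp$ contributes nothing --- and here projectability is precisely what guarantees that all branches of such a third-party choice have equal projection on $\pp$, so that the reached process is independent of the branch taken. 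Applying this to $\comseq_1$ and $\comseq_2$, which both reach nodes decorated by the channel $\pp\q$ with $\pp$ as sender, $\proj{\G}{\pp}$ reaches an output choice towards $\q$ after each of $\projS{\comseq_1}{\pp}$ and $\projS{\comseq_2}{\pp}$. Since $\projS{\comseq_1}{\pp}=\projS{\comseq_2}{\pp}$ and processes are deterministic (labels within a choice are pairwise distinct), these two continuations coincide, so the two $\pp\q$-nodes carry the same set of outgoing labels. As $\la_2$ labels a branch of the node reached by $\comseq_2$, it labels a branch of the node reached by $\comseq_1$, giving $\comseq_1\cdot\alpha_2\in\FPaths{\G}$ as required, and \refToLemma{sibling-events} then completes the proof.

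I expect the correspondence of the third paragraph to be the main obstacle, since it is the unique step that actually uses projectability and it must be arranged so that determinism of $\proj{\G}{\pp}$ upgrades equality of $\pp$-projections to equality of the whole continuation, and hence of the branch sets at the two $\pp\q$-nodes. The regularity/coinductive nature of $\G$ poses no difficulty here, as $\comseq_1$ is finite and only a finite traversal is inspected. The bookkeeping of the second paragraph --- that causal prefixing retains every $\pp$-mentioning communication and that projection on $\pp$ is $\sim$-invariant --- is routine, but needs to be stated with care, as it is exactly what licenses replacing $\comseq_1$ and $\comseq_2$ by the common trace $\comseq$ when comparing $\pp$-projections.
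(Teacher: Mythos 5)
The paper never gives its own proof of this lemma: it states explicitly that all lemmas are left unproved, so there is no official argument to compare yours against. Judged on its own merits, your proof is correct, and its architecture is clean: taking $\comseq'=\comseq_1$ reduces everything to the single claim $\comseq_1\cdot\alpha_2\in\FPaths{\G}$, after which \refToLemma{sibling-events} indeed transports the $\ev{\cdot}$-equality from $\alpha_1$ to $\alpha_2$ exactly as you say. The two supporting steps also check out. First, $\projS{\comseq_i}{\pp}=\projS{\comseq}{\pp}$ holds because the seed $\eqclass{\alpha_i}$ of the causal prefixing always mentions $\pp$, so every $\pp$-mentioning communication of $\comseq_i$ is retained by $\circ$ (its participant set meets that of the accumulated trace), and $\projS{\cdot}{\pp}$ is invariant under $\sim$ (swapped communications have disjoint participants, so at most one involves $\pp$). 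Second, the traversal/projection correspondence goes through by induction on the path, with the third-participant case of \refToFigure{fig:proj} guaranteeing that the projection onto $\pp$ is unchanged along any branch of a choice not involving $\pp$; determinism of processes (labels in a choice are pairwise distinct) then forces the processes reached after the equal action sequences $\projS{\comseq_1}{\pp}=\projS{\comseq_2}{\pp}$ to coincide, hence the two $\pp\q$-nodes carry the same label sets and $\la_2$ appears at the node reached by $\comseq_1$.

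Three points should be made explicit in a full write-up, though none is a gap. (i) The notion of a process ``performing an action sequence'' is not defined in the paper; you must introduce the evident LTS on processes and note its determinism before invoking it. (ii) The induction behind your correspondence needs that every subterm of a projectable type is projectable; this is legitimate (the paper itself uses this fact inside the proof of \refToTheorem{sem-projectability}), but it should be stated. (iii) Your side remark that the retained communications are ``exactly the transitive causes'' of $\alpha_i$ is informal; fortunately your argument only needs the one inclusion you actually justify, namely that all $\pp$-mentioning communications are retained. A pleasant feature of your route is that it consumes projectability only through the projections onto the sender $\pp$, whereas the paper's use of the lemma (in \refToTheorem{sem-projectability}) goes on to exploit projections onto third participants; your proof shows the lemma itself is a purely ``sender-side'' fact.
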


Let $\comseq\in\FPaths{\G}$. We denote by $\G_{\comseq}$ the
{\em subterm of $\G$ after $\comseq$}, 
which is easily defined by
induction on the length of $\comseq$. The converse of
\refToLemma{initial-conflicts-and-branching-points} is immediate,
since any subterm of $\G$ is $\G_{\comseq}$ for some
$\comseq\in\FPaths{\G}$. So if $\G_{\comseq}=\gt\pp\q i I \la {\G'}$
with $\set{1,2}\subseteq I$, then the two events
 $\comocc_1= \ev{\comseq\cdot \Comm{\pp}{\M_1}{\q}}$
and
$\comocc_2= \ev{\comseq\cdot \Comm{\pp}{\M_2}{\q}}$
are in
initial conflict because by~\refToLemma{sibling-events}  there exists $\comseq'$ such
that
$\ev{\comseq\cdot \Comm{\pp}{\M_i}{\q}}=\eqclass{\comseq'\cdot \Comm{\pp}{\M_i}{\q}}$ for
$i\in\set{1,2}$.
\begin{theorem}[Projectability preservation]
  \label{sem-projectability}
If $\GP$ is projectable then $\ESG{\GP}$ is
semantically projectable.
\end{theorem}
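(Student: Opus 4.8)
The plan is to reduce the semantic condition to the branching structure of $\G$ through \refToLemma{initial-conflicts-and-branching-points}, and then to exploit \emph{syntactic} projectability at the branching point so located, in order to transfer the witnessing computation from one branch of the choice to the other. Assume $\comocc_1 \grin \comocc_2$ with $\comocc_i = \eqclass{\concat\comseq{\Comm\pp{\M_i}\q}}$, and assume $\comocc'_1 = \eqclass{\concat{\comseq_1}{\alpha_1}}$ satisfies $\comocc_1 \precP \comocc'_1$ and $\pr \in \participant{\alpha_1} \setminus \participant{\comm{\comocc_1}}$; since $\participant{\comm{\comocc_1}} = \set{\pp,\q}$, we have $\pr \notin \set{\pp,\q}$. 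First I would invoke \refToLemma{initial-conflicts-and-branching-points} to obtain $\comseq^*$ with $\concat{\comseq^*}{\Comm\pp{\M_i}\q} \in \FPaths\G$ and $\ev{\concat{\comseq^*}{\Comm\pp{\M_i}\q}} = \comocc_i$ for $i \in \set{1,2}$; write $\chi_i = \concat{\comseq^*}{\Comm\pp{\M_i}\q}$. Thus $\G_{\comseq^*}$ is a genuine choice node of $\G$ on channel $\pp\q$ carrying both branches $\M_1$ and $\M_2$, with $\M_i$-subterm $\G_{\chi_i}$.

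Since $\comocc_1 \precP \comocc'_1$ I would write $\comocc'_1 = \eqclass{\concat{\concat\comseq{\Comm\pp{\M_1}\q}}{\eta}}$ with $\alpha_1 = \last\eta$, and realise this event along the chosen branch: there is $\mu$ with $\concat{\chi_1}{\mu} \in \FPaths\G$, $\last\mu = \alpha_1$ and $\ev{\concat{\chi_1}{\mu}} = \comocc'_1$. This is legitimate because the causal future of $\comocc_1$ in the g-PES depends only on $\comocc_1$, not on the particular tree realisation of $\comocc_1$ --- precisely the operational correspondence recalled after \refToDef{eg}. Now I would use projectability: as $\pr \notin \set{\pp,\q}$, the choice clause of the projection forces $\proj{\G_{\chi_1}}{\pr} = \proj{\G_{\chi_2}}{\pr}$. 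Hence the $\pr$-behaviour $\projS\mu\pr$, which ends with $\pr$'s action in $\alpha_1$, is available in $\G_{\chi_2}$ as well; and since $\proj{(\cdot)}{\pr}$ records \emph{exactly} the tree communications in which $\pr$ participates, the whole communication $\alpha_1$ reappears there, i.e.\ there is $\mu'$ with $\concat{\chi_2}{\mu'} \in \FPaths\G$, $\last{\mu'} = \alpha_1$ and $\projS{\mu'}\pr = \projS\mu\pr$. Put $\comocc'_2 = \ev{\concat{\chi_2}{\mu'}}$, which lies in $\EGG(\G)$ by \refToLemma{traces-vs-events}(\ref{p1}) and (\ref{p2}).

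It remains to check the three clauses. Since $\ev{\chi_2} = \comocc_2$ is a prefix event of $\comocc'_2$, we get $\comocc_2 \precP \comocc'_2$, and $\alpha_2 := \last{\comocc'_2} = \alpha_1$ by construction. For the history clause I would use that, $\alpha_1$ involving $\pr$, every earlier communication of $\pr$ on a path lies in the causal history of $\alpha_1$; hence the subtrace $\ev{(\cdot)}$ discards only communications not involving $\pr$, so $\projS{\comocc'_1}\pr = \projS{(\concat{\chi_1}{\mu})}\pr = \concat{\projS{\comseq^*}\pr}{\projS\mu\pr}$ and likewise $\projS{\comocc'_2}\pr = \concat{\projS{\comseq^*}\pr}{\projS{\mu'}\pr}$ (the branch communications projecting to $\ee$ on $\pr$). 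As $\projS{\mu'}\pr = \projS\mu\pr$ these coincide, and cancelling the common last action $\alpha_1 = \alpha_2$ gives $\projS{\comseq_2}\pr = \projS{\comseq_1}\pr$; here I use that $\projS{(\cdot)}{\pr}$ is invariant under $\sim$, hence well defined on g-events.

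I expect the transfer step to be the main obstacle, specifically the claim that a communication $\alpha_1$ engaging the third party $\pr$ and reachable in the $\M_1$-branch is reachable in the $\M_2$-branch \emph{with the same $\pr$-history}. This is delicate exactly when the partner of $\pr$ in $\alpha_1$ is $\pp$ or $\q$, whose behaviour genuinely differs across the two branches; what rescues the argument is that $\pr$ is a \emph{third} participant, so projectability pins down $\pr$'s local type identically in both branches, and as this local type is read off from the very communications of $\pr$ along the tree, the node $\alpha_1$ is forced into the $\M_2$-branch too. A secondary, technical point is realising $\comocc'_1$ through the specific branching point $\comseq^*$ delivered by \refToLemma{initial-conflicts-and-branching-points}, rather than through some other $\sim$-equivalent branch: this is where the non-injectivity of the map from tree branching points to initial conflicts (noted before \refToLemma{initial-conflicts-and-branching-points}) must be absorbed, which I would do via the determinacy of the event-structure semantics, if necessary by induction along the causal chain from $\Comm\pp{\M_1}\q$ to $\alpha_1$.
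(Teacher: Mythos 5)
Your overall strategy is the same as the paper's: locate a genuine branching point via \refToLemma{initial-conflicts-and-branching-points}, realise $\comocc'_1$ along the $\M_1$-branch, and use syntactic projectability of that choice node to transfer a matching continuation to the $\M_2$-branch. But there is a genuine gap at the crux, namely your verification of $\comocc_2 \precP \comocc'_2$. You assert this follows because $\ev{\chi_2}=\comocc_2$ ``is a prefix event of $\comocc'_2$''. It does not: by \refToDef{sgeo}, $\comocc_2 \precP \comocc'_2$ requires the trace of $\comocc_2$ to be a prefix of a representative of the trace of $\comocc'_2$, whereas $\comocc'_2=\ev{\concat{\chi_2}{\mu'}}$ is computed by causal prefixing (\refToDef{causal-path}), which \emph{discards} every communication of $\concat{\chi_2}{\mu'}$ that is not linked to $\last{\mu'}=\alpha_1$ by a chain of communications with pairwise overlapping participants. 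In particular $\Comm\pp{\M_2}\q$ may be dropped, in which case $\comocc_2\not\precP\comocc'_2$. Trace-prefixing does not imply event causality: in the example after \refToDef{event-of-sigma}, $\pp\q\la_1$ is a prefix of $\concat{\pp\q\la_1}{\pr\ps\la}$, yet $\ev{\pp\q\la_1}=\eqclass{\pp\q\la_1}$ is not a cause of $\ev{\concat{\pp\q\la_1}{\pr\ps\la}}=\eqclass{\pr\ps\la}$. Moreover, your $\mu'$ is constrained only by $\last{\mu'}=\alpha_1$ and $\projS{\mu'}\pr=\projS{\mu}\pr$, which says nothing about the communications of $\mu'$ \emph{not} involving $\pr$ --- precisely the ones that must form the connecting chain from $\Comm\pp{\M_2}\q$ to $\alpha_1$ in the case $\participant{\alpha_1}\cap\set{\pp,\q}=\emptyset$.

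This missing step is where the paper's proof does most of its work. When $\participant{\alpha_1}\cap\set{\pp,\q}\neq\emptyset$ the causal link is immediate; otherwise the paper uses pointedness of the trace of $\comocc'_1$ to extract a \emph{bridging communication sequence} $\beta_1,\ldots,\beta_n$ connecting $\Comm\pp{\M_1}\q$ to $\alpha_1$ on the $\M_1$-side, and then transfers (a suitable tail of) this chain, in the same order, to the $\M_2$-side by exploiting equality of the projections of the two branches on \emph{all} third participants --- not just on $\pr$. This is what guarantees that causal prefixing retains $\Comm\pp{\M_2}\q$, i.e., that $\comocc_2\precP\comocc'_2$. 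Your closing paragraph mislocates the difficulty: the transfer of $\alpha_1$ with the same $\pr$-history and the non-injectivity of branching points are indeed handled by projectability and by \refToLemma{initial-conflicts-and-branching-points}, but the bridging-chain argument for the causal link is the idea your proof is missing, and without it the chosen witness $\comocc'_2$ need not satisfy the first clause of \refToDef{def:semproj}.
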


\begin{proof} 
  Let $\comocc_1, \comocc_2 \in \EGG(\GP)$ and
  $\comocc_1 \grin \comocc_2$.  By definition,
  $\comocc_1=\eqclass{\concat\comseq{\Comm{\pp}{\M_1}{\q}}}$ and
  $\comocc_2=\eqclass{\concat\comseq{\Comm{\pp}{\M_2}{\q}}}$ for some
  $\comseq, \pp, \q, \M_1, \M_2$ such that $\M_1\neq\M_2$. Let
  $\alpha_i = \Comm{\pp}{\M_i}{\q}$ for $i\in\set{1,2}$. \\
   Since $\GP$ is projectable, by
  \refToLemma{initial-conflicts-and-branching-points} there exists
  $\comseq'$ such that $\comseq'\cdot\alpha_i\in\FPaths{\G}$ and
  $\ev{\comseq'\cdot\alpha_i}=\eqclass{\comseq\cdot\alpha_i}$ for
  $i\in\set{1,2}$.  Then $\G_{\comseq'}= \gt\pp\q i I \la {\G'}$ with
  $\set{1,2} \subseteq I$. Since $\G$ is projectable, also
  $\G_{\comseq'}$ is projectable.
  Thus, for any $\pr \not\in\set{\pp,\q}$ we get $\proj{\G'_1}{\pr}=\proj{\G'_2}{\pr}$. \\
  Let $\comocc'_1\in \EGG(\GP)$, with $\comocc_1\precP_\GP\comocc'_1$,
  ${\sf cm}(\comocc'_1) = \beta$, and
  $\pr\in\participant{\beta}\backslash \set{\pp, \q}$. Since
  $\comocc_1\precP_\GP\comocc'_1$,
  it must necessarily be
   $\comocc'_1 =\eqclass{\comseq\cdot\alpha_1\cdot\comseq_1\cdot
     \beta}
   = \ev{\comseq'\cdot\alpha_1\cdot\comseq'_1\cdot
     \beta}$ for some $\comseq_1, \comseq'_1$. Then $\comseq'_1\cdot
   \beta$ is a path in $\G'_1$. 
   \\
        Since $\G_{\comseq'}$ is projectable, $\proj{\G'_1}{\pr}=\proj{\G'_2}{\pr}$. Then there must
      be a path $\comseq'_2$ of $\G'_2$ such that 
      $\projS{\comseq'_1\cdot\beta}{\pr}=\projS{\comseq'_2\cdot\beta}{\pr}$. We
      want to show that $\comocc'_2 =
      \ev{\concat{\comseq'\cdot\alpha_2\cdot\comseq'_2}{\beta}} =
      \eqclass{\concat{\comseq'\cdot\alpha_2\cdot\comseq_2}{\beta}} $
      for some $\comseq_2$, i.e., that $\comocc_2 \precP \comocc'_2
      $.  Now, if 
       $\participant{\beta} \cap \set{\pp,\q} \neq \emptyset$, 
       we can
       conclude immediately. So, let us assume
       $\participant{\beta}\cap \set{\pp,\q} 
       = \emptyset$. 
       \\
      Since  $\comocc'_1 = \ev{\comseq'\cdot\alpha_1\cdot\comseq'_1\cdot
     \beta} =
       \eqclass{\comseq\cdot\alpha_1\cdot\comseq_1\cdot
         \beta}$ we know that  $\alpha_1\cdot\comseq_1\cdot \beta$ is
       a pointed trace. So, 
        there must be a \emph{bridging communication sequence} between
        $\alpha_1$ and $\beta$,  namely
        there must be a subtrace $\beta_1\cdots \beta_n$ of
        $\comseq_1$ for some $n \geq 1$
        such that 
        
        \centerline{
       $ \participant{\alpha_1}\cap \participant{\beta_1} \neq \emptyset\qquad
\participant{\beta_n}\cap \participant{\beta} \neq \emptyset \qquad
       \participant{\beta_i}\cap \participant{\beta_{i+1}} 
       \neq \emptyset~~~\mbox{for}~~1\leq i <n
   $}

\noindent
Correspondingly, we will have
         $\comseq'_1 =
         \widehat{\comseq}_1\cdot\beta_1\,\cdots\, \widehat{\comseq}_n
         \cdot \beta_{n} $.
         There are now two possible cases:

         - $\participant{\beta_i} \neq \set{\pp, \q} $
           for every $i =1, \ldots, n$.
           Since $\G_{\comseq'}$ is projectable,
           $\proj{\G'_1}{\ps}=\proj{\G'_2}{\ps}$ for all
           $\ps\not\in\set{\pp,\q}$, i.e.,
           $\projS{\comseq'_1}{\ps}=\projS{\comseq'_2}{\ps}$.
           Therefore all the $\beta_{i}$'s for
           $1\leq i \leq n$ must occur in the
           same order  in $\comseq'_2$, i.e.
           $\comseq'_2 = \tau_1\cdot\beta_1\,\cdots\, \tau_n \cdot
           \beta_n $ for some
           $\tau_1, \ldots, \tau_n$.
Hence 
$\ev{\concat{\comseq'\cdot\alpha_2\cdot\comseq'_2}{\beta}} =
\eqclass{\concat{\comseq'\cdot\alpha_2\cdot\comseq_2}{\beta}} $ for
some $\comseq_2$. 
           
-  $\participant{\beta_j} = \set{\pp, \q} $ for some $j$,
  $1\leq j \leq n$. Let $k$ be the maximum such index $j$. Then we know
  that $\participant{\beta_{h}}\neq\set{\pp,\q}$ for every $h, k+1\leq h \leq n$,
  and either $\pp\in\participant{\beta_{k+1}}$ or
  $\q\in\participant{\beta_{k+1}}$.
Therefore all the $\beta_{h}$'s for $k+1\leq h \leq n$ must
occur in the same order in $\comseq'_2$, i.e.
 $\comseq'_2 = \tau_k\cdot\beta_{k+1}\,\cdots\, \tau_n \cdot
 \beta_n $ for some 
 $\tau_k, \ldots, \tau_n$.
Hence, $\ev{\concat{\comseq'\cdot\alpha_2\cdot\comseq'_2}{\beta}} =
\eqclass{\concat{\comseq'\cdot\alpha_2\cdot\comseq_2}{\beta}} $ for
some $\comseq_2$.
\end{proof}

The converse is not true, i.e., semantic projectability of $\ESG{\GP}$
does not imply projectability of $\G$, as shown by the global type
$\G'$ in \refToExample{non-realisable-global-type}. Note that there is
no network behaving as prescribed by $\G'$. We conjecture that for
realisable global types semantic projectability implies
projectability.

{


%
\mylabel{sec:semantic-bound}


We now define a notion of semantic boundedness for PESs, which is
\emph{global} in that it looks simultaneously at all occurrences of each
participant $\pp$ in the g-events whose last communication involves
$\pp$.
Let $S = (E, \precP, \grr)$ be a g-PES.
For any participant $\pp$, let $\pp \in
\participant{S}$ if there exists $\comocc\in E$ such that
$\pp \in \participant{\comocc}$. 
\begin{definition}[Semantic $k$-depth and semantic boundedness]\label{sem-k-depth}
 Let $S = (E, \precP, \grr)$ be a g-PES.
 The two functions $\depthk{k}{\pp}{\comocc}$ and $\gdepthS{\pp}{S}$ are defined by:\\
 \[\begin{array}{l}
 \depthk{k}{\pp}{\eqclass{\comseq}} =\begin{cases} \cardin{\comseq} &\text{
     if } ~\comseq =
   \comseq_1\cdot\alpha_1 \cdots \comseq_k\cdot\alpha_k
   \begin{array}[t]{l}
     \text {  and  } \pp \in \participant{\alpha_i} ~~\mbox{for $i = 1,\ldots, k$} \\
    \text {  and  } \pp \notin \participant{\comseq_i} ~~\mbox{for $i =
     1,\ldots, k$}
     \end{array}\\
  0   & \text{otherwise }
\end{cases}
\\
\gdepthkS{\pp}{S}  =
    \sup  (  \{\depthk{k}{\pp}{\comocc}\ |\ \comocc
               \in E \}  )   ~~ \mbox{for every}~ k\in \Nat 
   \end{array}
 \]
 $S$ is {\em semantically bounded} if $\gdepthkS{\pp}{S}$ is
 finite for each participant $\pp \in\participant{S}$ and each
 $k\in \Nat$.
 \end{definition}

\begin{theorem}[Boundedness preservation]
   \label{boundedness-characterisation}
   If 
   $\G$ is bounded, then 
   $\ESG{\GP}$ is semantically bounded.
\end{theorem}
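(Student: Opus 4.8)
The plan is to bound the semantic $k$-depth of each participant by reducing it to the \emph{syntactic} depth of the subterms of $\G$ traversed along a single path. First I would fix a participant $\pp\in\participant{\ESG{\G}}$ and an integer $k\in\Nat$. Since $\G$ is regular it has only finitely many distinct subterms, and since $\G$ is bounded each such subterm $\G'$ satisfies $\depthG{\pp}{\G'}<\infty$; hence $D_\pp=\max\set{\depthG{\pp}{\G'}\mid\G'\text{ a subterm of }\G}$ is a finite natural number. I claim that $\gdepthkS{\pp}{\ESG{\G}}\leq k\cdot D_\pp$, which immediately yields semantic boundedness. To prove it I would take any g-event $\comocc\in\EGG(\G)$ with $\depthk{k}{\pp}{\comocc}\neq 0$ (if there is none, the supremum is $0$ and we are done). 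By \refToDef{sem-k-depth} this means $\comocc$ has a representative $\comseqA_1\cdot\alpha_1\cdots\comseqA_k\cdot\alpha_k$ with $\pp\in\participant{\alpha_i}$ and $\pp\notin\participant{\comseqA_i}$ for every $i$, and $\depthk{k}{\pp}{\comocc}$ is its length; in particular $\comocc$ has exactly $k$ communications involving $\pp$ and $\comm{\comocc}$ involves $\pp$. By \refToDef{eg} there is a path $\comseq\in\FPaths{\G}$ with $\ev{\comseq}=\comocc$, and since $\comm{\comocc}=\last{\comseq}$, the path $\comseq$ ends with a communication involving $\pp$.

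The central step is to show that $\comseq$ \emph{itself} contains exactly $k$ communications involving $\pp$, i.e. that forming the g-event discards none of them. Writing $\comseq=\comseq'\cdot\alpha$ with $\alpha=\last{\comseq}$, we have $\ev{\comseq}=\cau{\comseq'}{\eqclass{\alpha}}$, so the communications of $\comseq'$ are processed from right to left while the seed $\alpha$, which involves $\pp$, is never removed from the g-event under construction. Consequently, whenever a communication of $\comseq'$ involving $\pp$ is processed it shares the participant $\pp$ with the current g-event (which still contains $\alpha$) and is therefore retained. Hence every $\pp$-communication of $\comseq$ survives in $\ev{\comseq}=\comocc$, and since $\comocc$ has exactly $k$ of them, so does $\comseq$.

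It then remains to bound $\cardin{\comseq}$. I would decompose $\comseq=\comseq_0\cdot\alpha_1\cdot\comseq_1\cdots\alpha_{k-1}\cdot\comseq_{k-1}\cdot\alpha_k$, where $\alpha_1,\ldots,\alpha_k$ are its $\pp$-communications (so $\alpha_k=\last{\comseq}$, with no trailing segment) and each $\comseq_i$ is $\pp$-free. For every $i$ the trace $\comseq_i\cdot\alpha_{i+1}$ is a path of the subterm of $\G$ reached just after $\alpha_i$ (of $\G$ itself for the initial gap $\comseq_0\cdot\alpha_1$), and its first $\pp$-communication is exactly $\alpha_{i+1}$; therefore $\cardin{\comseq_i}+1=\depthG{\pp}{\comseq_i\cdot\alpha_{i+1}}\leq D_\pp$ by the choice of $D_\pp$. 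Summing over $i$ gives $\cardin{\comseq}=\sum_{i}\cardin{\comseq_i}+k\leq k\cdot D_\pp$, and since $\ev{\comseq}$ is a subtrace of $\comseq$ we get $\depthk{k}{\pp}{\comocc}=\cardin{\ev{\comseq}}\leq\cardin{\comseq}\leq k\cdot D_\pp$. Taking the supremum over all $\comocc$ yields $\gdepthkS{\pp}{\ESG{\G}}\leq k\cdot D_\pp<\infty$, as required.

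The step I expect to be the main obstacle is the causal-retention argument of the second paragraph: everything downstream depends on the fact that passing from the path $\comseq$ to the g-event $\ev{\comseq}$ preserves the number of $\pp$-communications. Without it, the path underlying a g-event with $k$ occurrences of $\pp$ could carry arbitrarily many occurrences of $\pp$, and the uniform per-gap bound $D_\pp$ could no longer be applied a bounded (namely $k$) number of times. Making this precise requires reasoning about the right-to-left behaviour of causal prefixing and the fact that the seed communication $\alpha$ is never deleted.
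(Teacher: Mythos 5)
Your proof is correct, and its skeleton coincides with the paper's: both lift the g-event $\comocc$ to a path $\comseq\in\FPaths{\G}$ with $\ev{\comseq}=\comocc$, both use the retention property of causal prefixing (a communication involving $\pp$ can never be discarded by $\ev{\cdot}$, because the seed communication $\last{\comseq}$ itself involves $\pp$) to conclude that the gaps between consecutive $\pp$-communications of $\comseq$ are $\pp$-free, and both bound each gap by the syntactic depth $\depthG{\pp}{\cdot}$ of the subterm of $\G$ reached just before it. The difference is in how the $k$ gaps are assembled. The paper argues by induction on $k$, setting $n_k = n_{k-1}+m$ where $m=\depthG{\pp}{\G_{\comseq''}}$ for the subterm $\G_{\comseq''}$ determined by the particular event under consideration; you instead sum the $k$ per-gap bounds directly against the single constant $D_\pp=\max\{\depthG{\pp}{\G'}\mid \G'\ \text{a subterm of}\ \G\}$, which is finite because $\G$ is regular and bounded. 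Your route is in fact more careful on a point the paper glosses over: in the paper's inductive step the quantity $m$ depends on the event (through $\G_{\comseq''}$), so for $n_k$ to serve as a \emph{uniform} bound over all events one must invoke exactly the observation you make explicitly, namely that regularity caps $\depthG{\pp}{\G'}$ over all subterms $\G'$ simultaneously. In exchange for dispensing with the induction, you also obtain the explicit estimate $\gdepthkS{\pp}{\ESG{\G}}\leq k\cdot D_\pp$, which remains implicit in the paper's formulation.
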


\begin{proof}
  Let $\G$ be bounded and $\ESG{\GP} = (\EGG(\GP), \precP,
   \grr)$. We want to show that
   $\gdepthkS{\pp}{\ESG{\GP}}$ is finite for each participant 
   $\pp \in\participant{\ESG{\GP}}$ and each $k\in
   \Nat$.  Fix some $\pp$. If
   $\pp\not\in\participant{\G}$ then $\pp
   \not\in\participant{\ESG{\GP}}$ and thus the statement is vacuously
   true.  So, assume
   $\pp\in\participant{\G}$. 
   We show now, by induction on
   $k$, that there exists $n_k \in \Nat$ such that $\depthk{k}{\pp}{\comocc} \leq
   n_k$ for any $\comocc\in \EGG(\GP)$. 

- Case $k=1$. 
 We may assume $\comocc = \eqclass{\comseq_1\cdot
    \alpha_1} \in
  \EGG(\G)$ with
  $\pp\not\in\participant{\comseq_1}$ and
  $\pp\in\participant{\alpha_1}$, since for any $\comocc'$ not of this shape 
  we have $\depthk{1}{\pp}{\comocc'} = 0$ and we can immediately conclude.
  Then there exists
  $\comseq'_1\cdot\alpha_1\in\FPaths{\G}$ such that $\comocc =
  \ev{\comseq'_1\cdot\alpha_1}$. Note that it must be
  $\pp\not\in\participant{\comseq'_1}$, since otherwise there would
  be some $\beta$ in $\comseq'_1$ such that
  $\participant{\beta}\cap \participant{\alpha_1} \neq
  \emptyset$ and the function $\ev{\cdot}$ would keep this
  $\beta$, contradicting the hypothesis $\pp \not
  \in \participant{\comseq_1}$.
Let $n_1 = \depthG{\pp}{\G} = \sup (\{\depthG{\pp}{\comseq}\ |\
   \comseq\in\FPaths{\G}\})$.
  By~\refToDef{depth} 
  ${\depthG{\pp}{\comseq'_1\cdot\alpha_1}} =  \cardin{\comseq'_1\cdot\alpha_1} \leq
  n_1$. 
  Then by~\refToDef{sem-k-depth} we get 
  $\depthk{1}{\pp}{\eqclass{\comseq_1\cdot\alpha_1}} ={
  \cardin{\comseq_1\cdot\alpha_1} \leq
  \cardin{\comseq'_1\cdot\alpha_1} \leq n_1}$.
  

- Case $k> 1$.  Assume that
  $\sup (\{\depthk{k-1}{\pp}{\comocc}\ |\ \comocc\in \EGG(\GP) \})
  \leq n_{k-1}$.
  Let $\comseq = \comseq_1\cdot \alpha_1\cdots \comseq_k\cdot
    \alpha_k$ be such that
  $\pp\not\in\participant{\comseq_i}$ and
  $\pp\in\participant{\alpha_i}$ for every $i = 1, \ldots, k$, and let
  $\comocc = \eqclass{\comseq}
  \in \EGG(\G)$.
  Then there exists
  $\comseq' = \comseq'_1\cdot\alpha_1 \cdots \comseq'_k\cdot
  \alpha_k\in\FPaths{\G}$ such that $\comocc = \ev{\comseq'}$. 
  For each $i= 1, \ldots, k$ we must have
  $\pp\not\in\participant{\comseq'_i}$, because otherwise we would
  contradict the hypothesis $\pp\not\in\participant{\comseq_i}$ (as
  argued in the previous case).  Let now
  $\comseq'' = \comseq'_1\cdot\alpha_1 \cdots \comseq'_{k-1}\cdot
  \alpha_{k-1}$, and consider the subterm $\G_{\comseq''}$ of $\G$.

  Since $\G$ is bounded and $\G_{\comseq''}$ is a subtree of
  $\G$, by~\refToDef{depth} we get $\depthG{\pp}{\G_{\comseq''}}
  = \sup (\{\depthG{\pp}{\comseq}\ |\
  \comseq\in\FPaths{\G_{\comseq''}}\}) = m\,$ for some
  $m\in\Nat$. Therefore
  $\depthk{1}{\pp}{\eqclass{\comseq_k\cdot\alpha_k}} =
  \cardin{\comseq_k\cdot\alpha_k} \leq
  \cardin{\comseq'_k\cdot\alpha_k} =
  \depthG{\pp}{\comseq'_k\cdot\alpha_k} \leq
  m$.  $\depthG{\pp}{\comseq'_k\cdot\alpha_k} =
  \cardin{\comseq'_k\cdot\alpha_k} \leq m$.
  Let $n_k = n_{k-1} + m$.
  We may conclude that
  $\depthk{k}{\pp}{\eqclass{\comseq_1\cdot \alpha_1\cdots \comseq_k\cdot
      \alpha_k} } =
  \depthk{k-1}{\pp}{\eqclass{\comseq_1\cdot \alpha_1\cdots \comseq_{k-1}\cdot
      \alpha_{k-1}} } +  \depthk{1}{\pp}{\eqclass{\comseq_k\cdot\alpha_k}}  
  \leq n_{k-1} + m
  = n_k$.


\end{proof}

\section{Structural Properties of g-PESs}
\mylabel{sec:struct-properties} In this section we discuss some 
additional 
properties of the PESs we obtain by
interpreting 
global types.  Some of these properties do not depend on the
well-formedness of global types but only on their syntax.
For instance, since we adopt for global types the \emph{directed choice}
construct of~\cite{HYC08,HYC16}: $\gt\pp\q i I \la \G$,
%
%
every branch of a choice uses the same channel $\pp \q$.
%
As a consequence, g-PESs satisfy the property of
\emph{initial conflict uniformity}: in every 
set $X= \set{\comocc_1, \ldots, \comocc_n}$ of initially conflicting
g-events, every $\comocc_i\in X$
uses the same channel $\pp\q$ in its last communication, i.e.,
$\comm{\comocc_i} = \pp\q\la_i$ for some $\la_i$.  Moreover, since
global types have deterministic LTSs, where no state can perform two
different transitions with the same label, the same holds for g-PESs:
if $\ESet, \ESet \cup \set{\comocc_1},
\ESet \cup \set{\comocc_2}$ are configurations of the same g-PES,
then $\comocc_1 \neq \comocc_2$ implies
$\comm{\comocc_1} \neq \comm{\comocc_2}$.
If moreover $\neg(\comocc_1 \gr \comocc_2)$, we additionally have 
$\participant{\comm{\comocc_1}} \cap \participant{\comm{\comocc_2}} =
\emptyset$.


Note that our core session calculus may be viewed as a \emph{linear}
subcalculus of Milner's calculus CCS, where parallel composition
appears only at top level and any pair of processes $P$ and $Q$, run
by participants $\pp$ and $\q$, can communicate only via two
unidirectional channels: channel $\pp\q$ for communication from $\pp$
to $\q$, and channel $\q \pp$ for communication from $\q$ to
$\pp$. Hence, \emph{restricted parallel composition}, which is the
kind of parallel composition used in session calculi, where processes
are only allowed to communicate with each other but not to proceed
independently, becomes an associative operation, while it is not
associative in full CCS (as observed by Milner in his 1980 book,
see~\cite{DBLP:books/sp/Milner80} page 21).
%
%
%

Then, a natural question is: how does our  ES semantics  for
the linear subcalculus of CCS compare to the ES semantics proposed
in~\cite{BC88a,BC88b} for other fragments of CCS?  To carry out this
comparison, we would need to take a more extensional view of g-PESs,
forgetting about the syntactic structure of g-events and retaining
only their last communication. In other words, we should consider
\emph{Labelled PESs}, where events are labelled by communications
$\pp\q\la$ and have no specific structure.   Moreover, some care
should be taken since, unlike our session calculus, our language for
global types is \emph{not} a subcalculus of CCS: indeed, while the
syntax of global types is included in that of CCS with guarded sums,
their semantics is not the same as that of CCS processes, since 
some communications may be performed under guards. 

More in detail, the work~\cite{BC88a} provides a characterisation of
the class of Labelled PESs obtained by interpreting the fragment of
CCS built from actions $a, b, \ldots$ by means of the three
constructors $+, ;, \parallel$, denoting respectively choice,
sequential composition and parallel composition with no communication.
As a matter of fact, \cite{BC88a} uses slightly more relaxed PESs
where conflict is not required to be hereditary - let us call them
r-PESs - and shows that the Labelled r-PESs obtained for that fragment
of CCS are exactly those satisfying two structural properties called
\emph{triangle freeness} and \emph{N-freeness}.  In conjunction,
these two properties express the possibility of extracting a head operator
among $+, ;, \parallel$ from the structure of the r-PES.
We recall from~\cite{BC88a} the definition of these properties. Let $\smile$
denote the \emph{concurrency} relation on the events of a PES
$S=(E,\leq, \gr)$, defined by
$\smile \, = (E\times E) - (\leq \cup\geq\cup\grr)$.  Let
$\diamond = (\leq \cup\geq)$ denote \emph{causal connection}. By
definition the three relations $\diamond, \grr$ and $\smile$ set a
partition over $E\times E$. 
Then triangle freeness (or $\nabla$-freeness) is defined as the
absence of a triple of events $e,e', e''$ such that
$e \,\diamond \, e' \grr \, e'' \smile e$ (see~\cite{BC88a} page 41).
Note that one half of triangle-freeness, where $\diamond$ is replaced
by $\geq$, is implied by conflict hereditariness in g-PESs.
%
We conjecture that g-PESs satisfy also the other half of
triangle-freeness, where $\diamond$ is replaced by $\leq$, namely they
do not feature the pattern 
$e \,\leq \, e' \grr \, e'' \smile e$, a situation known as
\emph{asymmetric confusion} in Petri nets.  The property of N-freeness
is slightly more involved. For any $R \in \set{\leq, \grr, \smile}$,
let $R^\varepsilon$ be the reflexive and symmetric closure of $R$
and $\ddagger(R)$ be the $R-incomparability$ relation defined by
$\ddagger(R) = (E\times E)-R^\varepsilon$. Then the N-freeness
property
is stated as follows:

\centerline{ \emph{N-freeness} \quad
  $~\forall\,R \in \set{\leq, \grr, \smile}$:
  (~$e_0 R e_1 \,\land\, e_0 \ddagger(R) e_2 \,\land\, e_2 R e_3
  \,\land\, e_1 \ddagger(R) e_3\,) \implies (\,e_0 R e_3 \implies e_2
  R e_1\,)$ }  
  This property does not hold for g-PESs, e.g.,
it does not hold for the g-PES of the global type
$\G = \gtCom\pp\q{\la_0} ; \gtCom\pp\pt{\la_1} ; \gtCom\pr\ps{\la_2} ;
\gtCom\q\ps{\la_3}$, with
$e_0 = \eqclass{\pp\q{\la_0}}, e_1 =
\eqclass{\pp\q{\la_0}\cdot\pp\pt{\la_1}}, e_2 =
\eqclass{\pr\ps{\la_2}}, e_3 = \eqclass{\pp\q{\la_0}\cdot
  \pr\ps{\la_2}\cdot \q\ps{\la_3}} $.

However, we may show that g-PESs 
satisfy particular instances of N-freeness, for instance
when $R$ is the covering relation of $\leq$ and
$e_1 \grin e_3$ (in which case conflict hereditariness enforces $e_0 \smile e_2$).
%

The paper~\cite{BC88b}, on the other hand, presents a Flow Event
Structure semantics for the whole calculus CCS. Our conjecture is that
this semantics should coincide with the Flow ES semantics proposed for
 sessions  in~\cite{CDG23}.  However, this is not entirely trivial since
the semantics of~\cite{BC88b} uses self-conflicting events, a specific
feature of Flow ESs, to interpret restricted parallel composition,
while the semantics of~\cite{CDG23} uses a pre-processing phase to
rule out the g-events that do not satisfy a causal well-foundedness
condition, and these events are a superset of those that are
self-conflicting in the semantics of~\cite{BC88b}.  However, one may
already observe that the Flow ESs obtained by interpreting sessions
in~\cite{CDG23} trivially satisfy the axiom $\Delta$ put forward
in~\cite{CZ97}
in order to guarantee that CCS parallel composition is a categorical product.\\

\noindent
%








\section{Conclusion}\mylabel{sec:rw-conclusion}


We conclude by further 
discussing related work
and by sketching some directions for future work.

{\bf Related work.} 
In~\refToSection{sec:struct-properties} we compared our PES semantics
for global types to existing ES semantics for other fragments of
CCS. In that case, the comparison was somewhat hindered by the fact
that the target ESs were not exactly the same (PESs vs r-PESs vs Flow
ESs). We now turn to other proposals of denotational models for
MPSTs. The models that are closest to ours are the \emph{graphical
  choreographies} by Guanciale and Tuosto~\cite{TG18}, the
\emph{choreography automata} by Barbanera, Lanese and
Tuosto~\cite{BLT20}, the \emph{global choreographies} by de'Liguoro,
Melgratti and Tuosto~\cite{DLMT22}, and the \emph{branching pomsets}
by Edixhoven et al.~\cite{EJPC24}.  It should be noted that most of
these works deal with asynchronous communication, so ``events'' (or
communications) are split into send events and receive events. Common
well-formedness conditions proposed in these works are
\emph{well-branchedness}~\cite{BLT20,DLMT22,EJPC24}, which in our case
is enforced by the syntax of global types, and
\emph{well-sequencedness}~\cite{BLT20,DLMT22}, which is automatically
enforced by our PES semantics.  
As regards the use of ESs to model
MPSTs, the paper~\cite{DLMT22} also uses PESs
to model (asynchronous)
choreographies, but it needs an additional type system to obtain
projectability (so, the resulting notion of projectability is not totally semantic).
In~\cite{EJPC24}, asynchronous
choreographies are modelled with branching pomsets, a model
featuring both concurrency and choice, which is
compared \mbox{with various classes of ESs.}
%

{\bf Future work.}
In this paper, we have devised semantic counterparts for the
well-formedness conditions of global types.  However, we have only
gone half the way in establishing a characterisation of the class of
Prime ESs representing well-formed global types.  To achieve such a
characterisation, we should prove the converse of
~\refToTheorem{boundedness-characterisation} and the following weaker
form of the converse of ~\refToTheorem{sem-projectability}:

\textbf{Conjecture}
{\it [Projectability reflection]} Let $\G$ be a global type. If $\ESG{\G}$ is
semantically projectable then there exists a projectable global type $\G'$ such
that $\ESG{\G'} = \ESG{\G}$.

If this conjecture were true, then our PES semantics for global types
would also provide a way to ``sanitise'' ill-formed global types. For
instance, starting from the g-PES $\ESG{\G'}$ of the ill-formed global type
$\G'$ of~\refToExample{non-realisable-global-type}, we would be able
to get back to
the well-formed global type $\G$ of the same example or
to the well-formed global type
$\G'' = \Seq{ \gtCom\pr\ps{\la_3}}{\pp\to\q : \set{\la_1; \End,
    \la_2; \End}}$. 
   Once we achieve a characterisation for this class of g-PESs, the next step would be
   to propose an algorithm to synthesise a well-formed global type (or
   directly a network) from a g-PES of this class.
    A further goal would be to semantically characterise less
   restrictive notions of projection, such as the one proposed
   in~\cite{DenielouY13}.

Since g-PESs are images of regular trees, it would be worth
investigating their connection with Regular Event
Structures~\cite{NT02}.  Moreover, as argued in the previous section,
extensional g-PESs, where g-events have no structure and are labelled
by their last communication, may be viewed as Labelled ESs whose
observable behaviour (the communications) is deterministic. Hence,
such extensional g-PESs could be characterised by the trace language
they recognise\footnote{Here,``trace'' should be intended as a
  Mazurkiewicz trace, namely as an equivalence class of standard
  traces with respect to an independence relation $I$ on the alphabet
  of the language, which in our case is given by
  $\pp\q\la ~I~ \pr\ps\la'$ if
  $\set{\pp,\q} \cap \set{\pr,\ps} =\emptyset$.}.


\paragraph{Acknowledgments.}
We are grateful to the anonymous reviewers for their useful
suggestions. The first author would also like to acknowledge
interesting discussions with Nobuko Yoshida, Francisco Ferreira and
Raymond Hu during her visits to Oxford University and Queen Mary
University of London in 2023.

%

\end{document}